	\newlength\figureheight 
	\newlength\figurewidth 
\newcommand{\modif}[1]{{\color{black} #1}}
\newcommand{\modifPB}[1]{{\color{black} #1}}
\newcommand{\modifwh}[1]{{\color{black} #1}}
\theoremstyle{plain} 
\newtheorem{lemma}{Lemma}
\newtheorem{theorem}{Theorem}
\newtheorem{coro}{Corollary}
\theoremstyle{definition}
\newtheorem{assumption}{Assumption}
\theoremstyle{remark} 
\newtheorem{remark}{Remark}
\newcommand{\RR}{{\mathbb{R}}}
\newcommand{\NN}{{\mathbb{N}}}
\newcommand{\xlim}{{\bs x}_\star} 
\newcommand{\zetalim}{\overline{\bs \zeta}_\star} 
\newcommand{\un}{{\bs 1}}
\DeclareMathOperator*{\sprad}{{\bs r}}
\DeclareMathOperator*{\tr}{tr}
\DeclareMathOperator*{\rank}{rank}
\DeclareMathOperator*{\colsp}{span}
\newcommand{\prox}{\mathop{\mathrm{prox}}\nolimits}
\DeclareMathOperator*{\argmin}{argmin}
\DeclareMathOperator*{\circul}{circ}
\newcommand{\bs}{\boldsymbol}
\newcommand{\eqdef}{{\stackrel{\text{def}}{=}}} 
\def\adots{
  \mathinner{\mkern1mu\raise1pt\hbox{.}\mkern2mu\raise4pt\hbox{.}
  \mkern2mu\raise7pt\vbox{\kern7pt\hbox{.}}\mkern1mu}}
\def\build#1_#2^#3{\mathrel{
\mathop{\kern 0pt#1}\limits_{#2}^{#3}}}
\begin{document}

\title{Explicit Convergence Rate of a Distributed Alternating Direction 
Method of Multipliers}

\author{F. Iutzeler, P. Bianchi, Ph. Ciblat, and W. Hachem
\thanks{CNRS LTCI; Telecom ParisTech. 46 rue Barrault, 75013 Paris, 
France. 
E-mails: \texttt{forename.name@telecom-paristech.fr}. 
This work was partially granted by the French Defense Agency (DGA)
and by the Telecom/Eurecom Carnot Institute.
}} 
\date{\today} 

\markboth{Revised version}{Iutzeler, Bianchi, Ciblat and Hachem: ADMM rate of 
convergence.}

\maketitle

\begin{abstract}
Consider a set of $N$ agents seeking to solve distributively the 
minimization problem $\inf_{x} \sum_{n = 1}^N f_n(x)$ where the convex 
functions $f_n$ are local to the agents. 
The popular Alternating Direction Method of Multipliers has the
potential to handle distributed optimization problems of this kind.
\modifPB{We provide a general reformulation of the problem
and obtain a class of distributed algorithms which encompass
 various network architectures.}
The rate of convergence of \modifPB{our} method is considered. 
It is assumed that the infimum of the problem is reached at a point $x_\star$, 
the functions $f_n$ are twice differentiable
at this point and $\sum \nabla^2 f_n(x_\star) > 0$
in the positive definite ordering of symmetric matrices.
With these assumptions, it is shown that the convergence to the consensus 
$x_\star$ is linear and the exact rate is provided. 
Application examples where this rate can be optimized with respect to the ADMM free parameter $\rho$ are also given.
\end{abstract}

\begin{keywords}
Distributed optimization, Consensus algorithms, 
Alternating Direction Method of Multipliers, Linear convergence,  
Convergence rate. 
\end{keywords} 

\section{Introduction} 
\label{intro} 

Consider a set of $N > 1$ computing 
agents that seek to solve collectively a minimization problem. 
Given that Agent $n$ has at its disposal a private convex cost function 
$f_n\,:\, \RR^K \to (-\infty, \infty]$ where $K$ is some positive 
integer, the purpose of the agents is to solve distributively the 
minimization problem 
\begin{equation}
\label{problem}
\inf_{x \in \RR^K} \sum_{n = 1}^N f_n(x) . 
\end{equation} 
A distributive (or decentralized) scheme is meant here to be an iterative 
procedure where at a each iteration, each agent updates a local estimate in 
the parameter space $\RR^K$ based on the 
sole knowledge of this agent's private cost function and on a piece of
information it 
received from its neighbors through some communication network. Eventually, 
the local estimates will converge to a common value (or consensus) which is a 
minimizer (assumed to exist) of the aggregate cost function $\sum f_n$. 

Instances of this problem appear in learning applications
where massive training data sets are distributed over a network and processed 
by distinct machines \cite{for-can-gia-jmlr10,agarwal2011reliable}, 
in resource allocation problems for communication networks 
\cite{chi-etal-procieee07,bia-jak-TAC13}, or in statistical estimation problems by
sensor networks \cite{ram-vee-ned-tac10,bia-for-hac-IT13}. 

The proximal splitting methods \cite{com-pes-tutorial11}
% originating from the convex optimization and the partial differential 
% equations communities,  
have recently attracted a large interest in the fields of statistics,
signal processing and communication theory thanks to their convergence
properties and to their ability to deal with large scale and
decentralized problems.  Among these, one of the most emblematic is
the Alternate Direction Method of Multipliers (ADMM). In
\cite{sch-rib-gia-sp08}, Schizas \emph{et al.}~showed that ADMM
easily lends itself to a distributed implementation of Problem
\eqref{problem}.  \modifPB{Since then, distributed versions of 
  ADMM applied to consensus problems have been explored in many works
  (see
  \cite{wei-ozd-arxiv13,iut-cdc13,shi-etal-(arxiv)13,jako-mou-xav-(arxiv)13,teixeira2013optimal,bianchi2014stochastic}
  as a non-exhaustive list).  In this paper, we provide a general
  framework inspired from \cite{sch-rib-gia-sp08} which allows to
  distribute an optimization problem on a set of agents.  From a
  formal point of view, we do \emph{not} assume the existence of a
  pre-existing ``graph'' whose edges would correspond to pairs of
  communicating agents.  Instead, our framework relies on the
  introduction of components $A_1,\dots,A_L$, each of which is a
  subset of agents. {\sl i)} In the case where the $A_\ell$'s are
  pairs of agents, our algorithm will involve pairwise communications
  between agents, as in \emph{e.g.} \cite{shi-etal-(arxiv)13}.  {\sl
    ii)} Identifying the $A_\ell$'s with larger sets of agents
  (clusters), our algorithm will be distributed at the \emph{cluster}
  level.  For instance, our framework encompasses the case of loosely
  coupled computer clusters composed of tighly coupled parallel
  machines. {\sl iii)} Finally, when the collection of components
  $A_1,\dots,A_L$ is reduced to a single set $A_1=\{1,\dots,N\}$ (that
  is, $L=1$), our algorithm reduces to the parallel ADMM algorithm
  described in \cite[Chapter 7]{boyd2011distributed}, in which all
  agents output are reduced in a centralized fashion at each iteration
  of the algorithm.  Otherwise stated, our framework yields a
  continuum of algorithms ranging from a fully centralized to a fully
  distributed setting.  }

% Owing to the popularity of the ADMM, the speed of 
% convergence issues are of importance. 
\modifPB{The main contribution of this paper} deals with the rate of convergence of ADMM in the 
framework of Problem \eqref{problem}. 
It is assumed that the infimum of Problem \eqref{problem} is
attained at a point $x_\star$, the functions $f_n$ are twice differentiable
at this point, and $\sum \nabla^2 f_n(x_\star) > 0$ 
in the positive definite ordering of symmetric matrices. 
With these assumptions, the linear convergence of 
the ADMM iterates is shown, and most of all, their convergence rate is 
explicitly provided. Our result potentially 
allows to evaluate the impact of the communication network on the performance,
as well as the effect of the step-size. Application examples 
where the step-size can be optimized are also given. \\

The method behind the proof is as follows. We first assume that the
functions $f_n$ are quadratic. In that case, an ADMM iteration boils down to
an affine transformation that we denote as $\zeta_{k+1} = R \zeta_k + d$.  
% (these objects will be defined below). 
These iterates converge at an exponential rate that can be explicitly 
obtained through an analysis of the eigenstructure of the matrix $R$. Turning 
to the general case, an ADMM iteration for $k$ large enough is shown to be 
a perturbed version of an affine transformation similar to the quadratic case.
A close look at the perturbation terms shows that they lie in such an 
eigenspace of $R$ that the analysis of the quadratic case remains essentially 
effective. 

Beyond the framework of distributed optimization, we believe that our  
technique can be used to characterize the rate of 
convergence of ADMM in more general constrained minimization settings where
the objective function is smooth in a neighborhood of the solution. \\ 

The ADMM rate of convergence was recently investigated in 
\cite{mon-svai-10,he-yuan-siam12,wei-ozd-cdc12,wei-ozd-arxiv13,gol-ma-sch-12}
where the ${\mathcal O}(1/k)$ convergence rate was established in the case
where the objective functions are not necessarily smooth. 

The authors of \cite{den-yin-rap12} consider the problem 
$
\min_{x : Ax+By = c} f(x) + g(y)
$ 
where one of the two objective functions is strongly convex and has a Lipschitz 
continuous gradient. They establish the linear convergence of the iterates
and provide upper bounds on the rate of convergence. 
The works \cite{bol-siam13} considers the quadratic or linear problem 
$\min_{x : Ax=b, x\geq 0} x^* Q x + c^* x$ where $Q$ is a symmetric positive
semidefinite matrix that may be equal to zero. The linear convergence of 
ADMM near the solution is established. \modifPB{A similar problem is investigated in~\cite{ghadimi2013}
where an upperbound on the decay rate is provided, along with the step size which minimizes the latter upperbound.}\\

The distributed consensus problem considered in this paper was also
studied by \cite{shi-etal-(arxiv)13}, \cite{jako-mou-xav-(arxiv)13} and \cite{teixeira2013optimal}. 
\modifPB{The algorithm studied by \cite{jako-mou-xav-(arxiv)13} strongly relies on 
the introduction of an inner loop at each iteration of the algorithm.
The authors of \cite{teixeira2013optimal} focus on quadratic programming and 
introduce a specific type of preconditioning for analysis purposes which also modifies the structure of the algorithm. 
Hence, both algorithms in \cite{jako-mou-xav-(arxiv)13} and 
\cite{teixeira2013optimal} differ from the natural ADMM of interest in \cite{shi-etal-(arxiv)13} 
and in the present paper.} The authors of \cite{shi-etal-(arxiv)13} prove the linear convergence of 
ADMM in a distributed setting, and provide an upper bound on the norm of the primal error.
The bound of~\cite{shi-etal-(arxiv)13} is moreover uniform w.r.t. the choice of the functions $f_n$
on a class of strongly convex functions with Lipschitz continuous gradients.
However, work is needed to fill the substantial gap between the bound of \cite{shi-etal-(arxiv)13}
and the practice. The aim of this paper is to obtain an exact and 
informative characterization of the convergence rate. In addition, the proof of \cite{shi-etal-(arxiv)13}
relies on the assumption that the functions $f_n$ are smooth strongly convex functions with 
Lipschitz continuous gradients, whereas the present paper relies on weaker assumptions.
%These contributions establish the linear 
%convergence of the iterates and provide upper bounds on the rate. 
\\

Finally, let us mention the recent preprint \cite{hon-luo-(arxiv)13} that 
considers the non smooth case. Using an approach similar to the one used in
\cite{luo-tse-93}, the linear convergence of the iterates is established in the
case where the step-size for updating the multipliers is small enough. 
No explicit convergence rate is provided. \\

After setting our assumptions in Section \ref{sec-pb}, we show how 
Problem \eqref{problem} can be distributively solved by ADMM after being
adequately reformulated. We then state our main convergence result
in Section~\ref{sec:main}. 
In Section \ref{sec-examples}, we provide an illustration of our result in some
special cases where the rate admits a simple and informative expression.
The main result is proven in Section \ref{sec-prf}. 
In Section \ref{sec-simus}, some numerical illustrations are provided. 
The conclusion is provided in Section~\ref{sec:conclusion}.

\section{Assumptions, Algorithm description}
\label{sec-pb} 

\subsection{Assumptions and problem reformulation} 
\label{formul}

\modif{Let us denote by $\Gamma_0({\mathbb R}^K)$ the set of proper, lower semicontinuous, and convex
functions from \modif{${\mathbb R}^K$} %some given Euclidean space 
to $(-\infty, \infty]$ where $K$ is an integer}.
The proximity operator of a function $h \in \Gamma_0(\mathbb R^n)$ is the 
mapping defined on $\RR^n\to\RR^n$ by
\[
\text{prox}_{h}(x) = \argmin_{w} \Bigl( h(w) + 
\frac{1}{2} \| w - x \|^2 \Bigr) \,.
\]
Denote as ${\cal A} = \{1,\ldots, N\}$ the set of agents. 
The assumptions on the functions $f_n$ considered in this paper are the 
following: 

\begin{assumption}
\label{conv-lsc} 
For any $n \in {\cal A}$, $f_n \in \Gamma_0({\mathbb R}^K)$. 
\end{assumption} 

\begin{assumption}
\label{nabla2}
The infimum of the problem \eqref{problem} is attained at a point
$x_\star$. At $x_\star$, the functions $f_n$ are twice differentiable and 
their Hessian matrices satisfy 
\[
\sum_{n=1}^N \nabla^2 f_n(x_\star) > 0 . 
\]
\end{assumption} 
These assumptions clearly imply that the minimizer $x_\star$ is unique. 
Observe that the functions $f_n$ are not required to be strictly or strongly 
convex at an individual level. Moreover, no global property of the gradients 
such as the existence or the Lipschitz continuity is assumed. 
We only require the two-fold differentiability of the functions $f_n$ and the 
strong convexity of $\sum_n f_n(x)$ at a local level.

Along the idea of \cite{sch-rib-gia-sp08}, we now provide another formulation 
of Problem \eqref{problem} that
will lead us to a distributed optimization algorithm. Thanks to 
Assumption \ref{ass-equiv} below, the two formulations will be shown 
to be equivalent.  

We introduce some notations. %We start by introducing some simple mathematical objects along with some  notations. 
Given any positive integer $\ell$, an element $x$ of $\RR^{\ell K}$ will be 
often denoted as $x = (x(1), \ldots, x(\ell) )$ where $x(m) \in \RR^K$ for 
$m=1,\ldots,\ell$. Let $C_\ell$ be the linear subspace of 
$\RR^{\ell K}$ whose elements $x = (x(1), \ldots, x(\ell))$ satisfy 
$x(1) = x(2) = \cdots = x(\ell)$. 
Denoting by $\un_\ell$ the $\ell \times 1$ vector of ones and by $\otimes$ the
Kronecker product, the orthogonal projection matrix on this subspace is 
$P_\ell = J_\ell \otimes I_K$ where $J_\ell = \ell^{-1} \un_\ell \un_\ell^*$. 
   
Given a positive integer $L$, let $A_1, \ldots, A_L$ be a collection of 
subsets of ${\cal A}$ such that the cardinality of any set $A_\ell$ 
satisfies $| A_\ell | > 1$. Define the functions 
$$
\begin{array}{clcl}  
f :& \RR^{NK} & \longrightarrow & (-\infty, \infty]  \\  
  & x & \longmapsto & 
f(x) = \sum_1^N f_n(x(n)) 
\end{array} 
$$
and 
$$
\begin{array}{clcl}  
g :& \RR^{|A_1| K} \times\cdots\times\RR^{|A_L| K} & \longrightarrow & 
                                                         (-\infty, \infty] \\  
  & z  = (z^{(1)}, \ldots, z^{(L)}) & \longmapsto & 
g(z) = \sum_1^L \imath_{C_{|A_\ell|}}(z^{(\ell)})
\end{array}
$$
where $\imath_C$ is the indicator function of $C$, defined to be equal to zero 
on $C$ and to $\infty$ outside this set.   

For any subset of agents $A \subset {\cal A}$, let 
${\cal S}_A : \RR^{NK} \to \RR^{|A|K}$ be the selection operator 
${\cal S}_A x = ( x(n))_{n\in A}$. This linear operator admits the matrix
representation ${\cal S}_A x = (S_A \otimes I_K) x$ where the matrix 
$S_A$ is a $|A| \times N$ selection matrix, \emph{i.e.}, its elements are
valued in $\{0, 1\}$, it has one non zero element per row, and
it has one non zero element at most per column. Finally, set $T = \sum_1^L | A_\ell |$ and
define the linear 
operator 
$$
\begin{array}{clcl}  
M :& \RR^{NK} & \longrightarrow & \RR^{TK} \\ 
  & x & \longmapsto & 
Mx = ({\cal S}_{A_\ell}(x) )_{\ell=1}^L = (S \otimes I_K) x 
\end{array}
$$ 
where  
\[
S = \begin{bmatrix} S_{A_1} \\ \vdots \\ S_{A_L} \end{bmatrix}
\]
is a $T \times N$ matrix. % with $T = \sum_1^L | A_\ell |$. 
Operator $M$ will
be identified from now on with the matrix $M = S \otimes I_K$. 

With these definitions, we now consider the optimization problem 
\begin{equation} 
\label{pb-reseau} 
\inf_{x \in \RR^{NK}} f(x) + g(Mx) . 
% = \\ 
% \inf_{x \in \RR^{NK}} \Bigl( \sum_{n=1}^N f_n(x(n)) + 
% \sum_{\ell=1}^L \imath_{C_{|A_\ell|}}( (x(i))_{i\in A_\ell} ) \Bigr) . 
\end{equation} 
Let ${\cal G} = (\{1,\ldots, L\}, {\cal E})$ be the non oriented graph with 
$\{1,\ldots, L\}$ as the set of vertices and with the set of edges $\cal E$ 
defined as $\{ \ell, m \} \in {\cal E}$ if $A_\ell \cap A_m \neq \emptyset$. 
Then, we made the following assumption. Let us remark that our proposed algorithms described later will be distributed at the subset level. A coordination will be needed within each subset $A_\ell$, but the exchanges between the subsets are fully distributed.  
\begin{assumption}
\label{ass-equiv}
The following facts hold true: 
\begin{itemize}
\item[i)] $\bigcup_{\ell=1}^L A_\ell = \cal A$, 
\item[ii)] The graph $\cal G$ is connected, 
\end{itemize} 
\end{assumption} 
We obtain the following lemma.   
\begin{lemma} 
\label{lm-equiv}
Under Assumption \ref{ass-equiv}, $x_\star$ is a minimizer of Problem
\eqref{problem} if and only if $(x_\star, \ldots, x_\star)$ is a minimizer
of Problem \eqref{pb-reseau}. 
\end{lemma}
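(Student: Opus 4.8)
The plan is to prove that the feasible set of Problem \eqref{pb-reseau}, namely the set of $x \in \RR^{NK}$ on which $g(Mx) < \infty$, coincides exactly with the global consensus subspace $C_N$. Once this is established, Problem \eqref{pb-reseau} reduces to the minimization of $f$ over $C_N$, which is merely a reparametrization of Problem \eqref{problem}, and the correspondence between the minimizers follows at once.

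First I would unfold the constraint. For $x = (x(1), \ldots, x(N))$, we have $g(Mx) = \sum_{\ell=1}^L \imath_{C_{|A_\ell|}}({\cal S}_{A_\ell} x)$, so $g(Mx) < \infty$ if and only if, for every $\ell \in \{1,\ldots,L\}$, the vector ${\cal S}_{A_\ell} x = (x(n))_{n \in A_\ell}$ lies in $C_{|A_\ell|}$, i.e. the components $x(n)$, $n \in A_\ell$, are all equal. The inclusion $C_N \subseteq \{x : g(Mx) < \infty\}$ is immediate: if all the $x(n)$ coincide then in particular they coincide within each $A_\ell$, whence $g(Mx) = 0$.

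The converse inclusion is the only substantive step, and is where Assumption \ref{ass-equiv} enters. Assume $g(Mx) < \infty$. For each $\ell$ let $v_\ell \in \RR^K$ denote the common value of $x(n)$ over $n \in A_\ell$, well defined by the previous paragraph. If $\{\ell, m\} \in {\cal E}$, then $A_\ell \cap A_m \neq \emptyset$; picking any $n$ in this intersection gives $v_\ell = x(n) = v_m$. Hence the map $\ell \mapsto v_\ell$ is constant along every edge of ${\cal G}$, and since ${\cal G}$ is connected (Assumption \ref{ass-equiv} ii)) it is constant on the whole vertex set, say $v_\ell \equiv c$ for a single $c \in \RR^K$. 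Using now that $\bigcup_\ell A_\ell = {\cal A}$ (Assumption \ref{ass-equiv} i)), every agent $n$ belongs to some $A_\ell$, so $x(n) = v_\ell = c$; thus all components of $x$ equal $c$ and $x \in C_N$. This proves $\{x : g(Mx) < \infty\} = C_N$. The main obstacle is really just making this connectivity-plus-covering argument precise; no analytic difficulty is involved, and it is clear that dropping either part of Assumption \ref{ass-equiv} would allow feasible $x$ outside $C_N$.

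Finally I would conclude. On $C_N$ the term $g(Mx)$ vanishes, so the objective of Problem \eqref{pb-reseau} equals $f(x)$ there and equals $+\infty$ off $C_N$; hence $\inf_{x} \bigl( f(x) + g(Mx) \bigr) = \inf_{x \in C_N} f(x)$. Parametrizing $C_N$ by $x = (c,\ldots,c)$ with $c \in \RR^K$ gives $f(x) = \sum_{n=1}^N f_n(c)$, so this infimum equals $\inf_{c} \sum_n f_n(c)$, the value of Problem \eqref{problem}. The same bijective parametrization shows that $x = (c,\ldots,c)$ minimizes Problem \eqref{pb-reseau} if and only if $c$ minimizes $\sum_n f_n$, i.e. solves Problem \eqref{problem}; taking $c = x_\star$ yields the claimed equivalence.
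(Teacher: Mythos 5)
Your proof is correct and takes essentially the same route as the paper: both arguments reduce the lemma to showing that the effective domain of $x \mapsto g(Mx)$ is exactly the consensus subspace $C_N$, using the connectivity of $\mathcal{G}$ (Assumption \ref{ass-equiv}-\emph{ii)}) to force a common value across overlapping components and Assumption \ref{ass-equiv}-\emph{i)} to extend it to all agents. Your edge-by-edge propagation of the values $v_\ell$ is just a rewording of the paper's successive merging of the indicator functions $\imath_{C_{|A_\ell|}}$ over growing unions of the $A_\ell$, and your explicit concluding reparametrization of $C_N$ is the step the paper leaves implicit.
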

\begin{proof}
The equivalence stated by this lemma will be established if we prove that 
$g(Mx) = \sum_{\ell=1}^L \imath_{C_{|A_\ell|}}( (x(i))_{i\in A_\ell} )$ is 
finite if and only if $x \in C_N$.  
Since $\cal G$ is connected, there exists $\ell_1 \neq 1$ such that 
$A_1 \cap A_{\ell_1} \neq \emptyset$. Therefore, 
$\imath_{C_{|A_1|}}( (x(i))_{i\in A_1} ) + 
\imath_{C_{|A_{\ell_1}|}}( (x(i))_{i\in A_{\ell_1}} ) = 
\imath_{C_{|A_1 \cup A_{\ell_1}|}}( (x(i))_{i\in A_1 \cup A_{\ell_1}} )$. 
Similarly, there exists $\ell_2 \not\in \{1, \ell_1\}$ such that 
$(A_1\cup A_{\ell_1}) \cap A_{\ell_2} \neq \emptyset$, therefore 
$\imath_{C_{|A_1|}}( (x(i))_{i\in A_1} ) + 
\imath_{C_{|A_{\ell_1}|}}( (x(i))_{i\in A_{\ell_1}} ) + 
\imath_{C_{|A_{\ell_2}|}}( (x(i))_{i\in A_{\ell_2}} ) = 
\imath_{C_{|A_1 \cup A_{\ell_1} \cup A_{\ell_2}|}}( 
(x(i))_{i\in A_1 \cup A_{\ell_1}\cup A_{\ell_2}} )$. 
Pursuing, we obtain that 
$g(Mx) = \imath_{C_{|\cup_\ell A_\ell|}}( (x(i))_{i\in \cup_\ell A_\ell} )$. 
By Assumption \ref{ass-equiv}-\emph{i)}, this is equal to 
$\imath_{C_{N}}( x )$. 
\end{proof} 

\modifPB{\subsection{An illustration} In order to be less formal and to have some insights on our formulation, consider the example given in Figure~\ref{fig:toy}.
  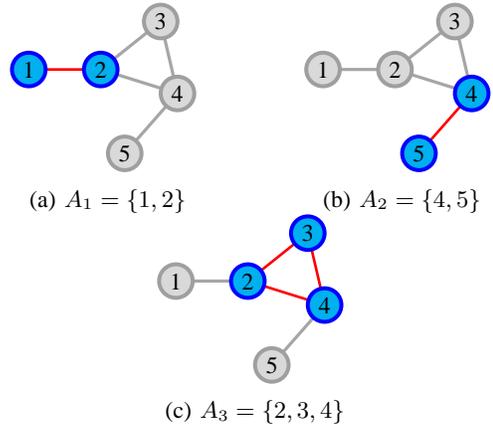
\begin{figure}[h]
	\setlength\figurewidth{0.4\columnwidth}
	\centering
        \begin{subfigure}{0.4\columnwidth}
                \centering
		\begin{tikzpicture}[x=0.9\figurewidth,y=0.9\figurewidth]

% Box  
% \draw (0,0) -- (0,1) ;
% \draw (0,1) -- (1,1) ;
% \draw (1,1) -- (1,0) ;
% \draw (0,0) -- (1,0) ;

% Coords
\coordinate (1) at  (0.2,0.6);
\coordinate (2) at  (0.5,0.6);
\coordinate (3) at  (0.75,0.8);
\coordinate (4) at  (0.82,0.5);
\coordinate (5) at  (0.6,0.25);

\draw[color=red,line width=1pt] (1) -- (2);
\draw[color=gray!70,line width=1.2pt] (2) -- (3);
\draw[color=gray!70,line width=1.2pt] (3) -- (4);
\draw[color=gray!70,line width=1.2pt] (2) -- (4);
\draw[color=gray!70,line width=1.2pt] (4) -- (5);

\draw [line width= 1.5pt,blue,fill=ProcessBlue] (1) circle(0.07) node[color=black,font=\small] {1};
\draw [line width= 1.5pt,blue,fill=ProcessBlue] (2) circle(0.07) node[color=black,font=\small] {2};
\draw [line width= 1.5pt,gray!75,fill=gray!30] (3) circle(0.07) node[color=black,font=\small] {3};
\draw [line width= 1.5pt,gray!75,fill=gray!30] (4) circle(0.07) node[color=black,font=\small] {4};
\draw [line width= 1.5pt,gray!75,fill=gray!30] (5) circle(0.07) node[color=black,font=\small] {5};

\end{tikzpicture}
                \caption{$A_1=\{1,2\}$}
        \end{subfigure}
~
        \begin{subfigure}{0.4\columnwidth}
                \centering
		\begin{tikzpicture}[x=0.9\figurewidth,y=0.9\figurewidth]

% Box  
% \draw (0,0) -- (0,1) ;
% \draw (0,1) -- (1,1) ;
% \draw (1,1) -- (1,0) ;
% \draw (0,0) -- (1,0) ;

% Coords
\coordinate (1) at  (0.2,0.6);
\coordinate (2) at  (0.5,0.6);
\coordinate (3) at  (0.75,0.8);
\coordinate (4) at  (0.82,0.5);
\coordinate (5) at  (0.6,0.25);

\draw[color=gray!70,line width=1.2pt] (1) -- (2);
\draw[color=gray!70,line width=1.2pt] (2) -- (3);
\draw[color=gray!70,line width=1.2pt] (3) -- (4);
\draw[color=gray!70,line width=1.2pt] (2) -- (4);
\draw[color=red,line width=1pt] (4) -- (5);

\draw [line width= 1.5pt,gray!75,fill=gray!30] (1) circle(0.07) node[color=black,font=\small] {1};
\draw [line width= 1.5pt,gray!75,fill=gray!30] (2) circle(0.07) node[color=black,font=\small] {2};
\draw [line width= 1.5pt,gray!75,fill=gray!30] (3) circle(0.07) node[color=black,font=\small] {3};
\draw [line width= 1.5pt,blue,fill=ProcessBlue] (4) circle(0.07) node[color=black,font=\small] {4};
\draw [line width= 1.5pt,blue,fill=ProcessBlue] (5) circle(0.07) node[color=black,font=\small] {5};

\end{tikzpicture}
                \caption{$A_2=\{4,5\}$}
        \end{subfigure}\\
        \begin{subfigure}{0.4\columnwidth}
                \centering
		\begin{tikzpicture}[x=0.9\figurewidth,y=0.9\figurewidth]

% Box  
% \draw (0,0) -- (0,1) ;
% \draw (0,1) -- (1,1) ;
% \draw (1,1) -- (1,0) ;
% \draw (0,0) -- (1,0) ;

% Coords
\coordinate (1) at  (0.2,0.6);
\coordinate (2) at  (0.5,0.6);
\coordinate (3) at  (0.75,0.8);
\coordinate (4) at  (0.82,0.5);
\coordinate (5) at  (0.6,0.25);

\draw[color=gray!70,line width=1.2pt] (1) -- (2);
\draw[color=red,line width=1pt] (2) -- (3);
\draw[color=red,line width=1pt] (3) -- (4);
\draw[color=red,line width=1pt] (2) -- (4);
\draw[color=gray!70,line width=1.2pt] (4) -- (5);

\draw [line width= 1.5pt,gray!75,fill=gray!30] (1) circle(0.07) node[color=black,font=\small] {1};
\draw [line width= 1.5pt,blue,fill=ProcessBlue] (2) circle(0.07) node[color=black,font=\small] {2};
\draw [line width= 1.5pt,blue,fill=ProcessBlue] (3) circle(0.07) node[color=black,font=\small] {3};
\draw [line width= 1.5pt,blue,fill=ProcessBlue] (4) circle(0.07) node[color=black,font=\small] {4};
\draw [line width= 1.5pt,gray!75,fill=gray!30] (5) circle(0.07) node[color=black,font=\small] {5};

\end{tikzpicture}
                \caption{$A_3=\{2,3,4\}$}
        \end{subfigure}
    \centering
    \caption{An example of with $L=3$ components.}
	 \label{fig:toy}
  \end{figure}
In that case, for any $x=(x(1),\dots,x(5))$, the vector $Mx$ has 3 block-components
respectively given by $(x(1),x(2))$, $(x(4),x(5))$ and $(x(2),x(3),x(4))$, that is:
\begin{equation}
Mx = \left( x(1),x(2)\, ,\, x(4),x(5)\, ,\, x(2),x(3),x(4)\right)\label{eq:3}
\end{equation}
In this example, the function $g$ is the indicator of the linear space composed of all vectors of the form
\begin{equation}
\left( u,u\, ,\, v,v\, ,\, w,w,w\right)\label{eq:2}
\end{equation}
for any $u, v,w$. This means that $g(z)$ is equal to zero whenever $z$ has the form~(\ref{eq:2})
and is equal to $+\infty$ otherwise. When $z=Mx$, we obtain that $g(Mx)$ is finite only if 
the vector~(\ref{eq:3}) has the form~(\ref{eq:2}). This holds if and only if $x(1)=x(2)$,
$x(4)=x(5)$, $x(2)=x(3)=x(4)$. Equivalently, all components of $x$ should be equal.
}

\subsection{Instancianting ADMM}
%The distributed ADMM algorithm} 
\label{subsec-decentralized}

We now recall how ADMM can be used to solve Problem 
\eqref{pb-reseau} in a distributed manner. ADMM is commonly described
by reformulating Problem \eqref{pb-reseau} into the constrained problem
\[
\inf_{z = Mx} f(x) + g(z),
\]
and by introducing the so called augmented Lagrangian. This is the function 
${\cal L}_\rho : \RR^{NK} \times \RR^{TK} \times \RR^{TK} \to 
(-\infty, \infty]$ defined as 
\[
{\cal L}_\rho(x,z,\lambda) = 
f(x) + g(z) + \langle \lambda , Mx-z \rangle + 
\frac{\rho}{2}\left\| Mx - z \right\|^2
\]
where $\rho > 0$ is a constant. ADMM consists in the iterations:
\begin{subequations}
\begin{align}
x_{k+1} &= \argmin_{x\in\RR^{NK}} {\cal L}_\rho(x, z_k; \lambda_k) \\ 
z_{k+1} &= \argmin_{z\in\RR^{TK}} {\cal L}_\rho(x_{k+1}, z; \lambda_k) \\ 
\lambda_{k+1} &= \lambda_k + \rho( M x_{k+1} - z_{k+1}). \label{lk} 
\end{align}
\end{subequations}
A proof of the following result can be found in 
\cite[Ch. 3.2 and Appendix A]{boyd2011distributed} combined with 
Lemma \ref{lm-equiv}. Another proof using the so called Douglas Rachford
splitting can be found in \cite{eckstein1992douglas}: 
\begin{theorem}
\label{1st-ord} 
Under Assumptions \ref{conv-lsc} to \ref{ass-equiv}, the set of saddle points
of the unaugmented Lagrangian ${\cal L}_0(x,z,\lambda)$ is nonempty, and any 
saddle point is of the form 
$(\un_N \times x_\star, \un_T \otimes x_\star, \lambda_\star)$ where 
$x_\star$ is the unique solution of Problem~\eqref{problem}. Moreover, 
for any initial value $(z_0, \lambda_0)$, the sequence of ADMM iterates 
$(x_k, z_k, \lambda_k)$ converges to a saddle point. 
\end{theorem}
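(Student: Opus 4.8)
The plan is to handle the two claims in turn: the structural description of the saddle points of $\mathcal{L}_0$, obtained from the optimality inclusions, and then the convergence of the iterates, obtained by reading ADMM as a Douglas--Rachford splitting. A triple $(\bar x,\bar z,\bar\lambda)$ is a saddle point of $\mathcal{L}_0(x,z,\lambda)=f(x)+g(z)+\langle\lambda,Mx-z\rangle$ if and only if it satisfies the stationarity relations
\begin{align*}
-M^{*}\bar\lambda \in \partial f(\bar x), \qquad \bar\lambda \in \partial g(\bar z), \qquad M\bar x = \bar z .
\end{align*}
Since $g$ is the indicator of $\prod_{\ell} C_{|A_\ell|}$, the last two relations force $\bar z=M\bar x$ to lie in this product of consensus subspaces; by the computation carried out in the proof of Lemma~\ref{lm-equiv}, this happens only when $\bar x\in C_N$, i.e. $\bar x=\un_N\otimes x^{*}$ for some $x^{*}\in\RR^K$, and then $\bar z=M\bar x=\un_T\otimes x^{*}$. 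This already yields the announced form, and it remains to see that $x^{*}=x_\star$.

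To identify $x^{*}$ I would exploit the separability of $f$ together with the structure of $M$. The first inclusion reads componentwise $v_n:=-(M^{*}\bar\lambda)_n\in\partial f_n(x^{*})$ for every $n$, and summing over $n$ (the elementary inclusion $\sum_n\partial f_n\subset\partial\sum_n f_n$ always holds) gives $\sum_n v_n\in\partial\bigl(\sum_n f_n\bigr)(x^{*})$. Now $\sum_n v_n=-(\un_N^{*}\otimes I_K)M^{*}\bar\lambda=-(\un_T^{*}\otimes I_K)\bar\lambda$, using $S\un_N=\un_T$ (each row of $S$ carries a single unit entry); moreover $\bar\lambda\in\partial g(\bar z)=\prod_\ell C_{|A_\ell|}^{\perp}$ means that each block of $\bar\lambda$ sums to zero, i.e. $(\un_T^{*}\otimes I_K)\bar\lambda=0$. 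Hence $0\in\partial(\sum_n f_n)(x^{*})$, so $x^{*}$ minimizes $\sum_n f_n$ and, by the uniqueness noted after Assumption~\ref{nabla2}, $x^{*}=x_\star$. For existence I would run the converse: $\sum_n\nabla f_n(x_\star)=0$ by optimality and the differentiability of the $f_n$ at $x_\star$, and a multiplier $\bar\lambda\in\prod_\ell C_{|A_\ell|}^{\perp}$ solving $(M^{*}\bar\lambda)_n=-\nabla f_n(x_\star)$ exists because the constraint $z=Mx$ is affine and the primal value is finite and attained, so the polyhedral constraint qualification guarantees a Lagrange multiplier (alternatively, one invokes \cite{boyd2011distributed} directly).

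For the convergence claim I would invoke the classical fact (see \cite{eckstein1992douglas}) that ADMM applied to $\min_{z=Mx} f(x)+g(z)$ coincides with a Douglas--Rachford splitting whose governing operator is firmly nonexpansive, so that its fixed-point set is in bijection with the saddle points of $\mathcal{L}_0$, which we have just shown to be nonempty; Opial's theorem for averaged operators then yields convergence of $(z_k,\lambda_k)$. The remaining work is to pass to $x_k$: here $M=S\otimes I_K$ is injective, because by Assumption~\ref{ass-equiv}-\emph{i)} every agent lies in some $A_\ell$, so every column of $S$ is nonzero and $S$ has full column rank. Consequently $Mx_k=(Mx_k-z_k)+z_k\to\bar z=M\bar x$, and the lower bound $\|Mx\|\ge\sigma_{\min}(M)\|x\|$ forces $x_k\to\bar x=\un_N\otimes x_\star$. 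The routine parts are the optimality inclusions and the Kronecker bookkeeping; the two delicate points are the existence of the multiplier without any strong-convexity or Slater hypothesis (resolved by the affine nature of the constraint) and the upgrade from convergence of $(z_k,\lambda_k)$ to convergence of the primal $x_k$, which is made painless precisely by the injectivity of $M$ that Assumption~\ref{ass-equiv}-\emph{i)} provides.
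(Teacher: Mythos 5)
Your proof is correct and follows essentially the same route as the paper, which disposes of this theorem by citing the standard ADMM convergence theory of \cite{boyd2011distributed} (equivalently, the Douglas--Rachford interpretation of \cite{eckstein1992douglas}) combined with Lemma~\ref{lm-equiv}; your write-up simply fills in the details those references supply --- the saddle-point inclusions, the identification $x^*=x_\star$ via the blockwise zero-sum structure of $\bar\lambda$ and $S\un_N=\un_T$, and the recovery of $x_k$ from $z_k,\lambda_k$ through the injectivity of $M$ (which is indeed the full-column-rank hypothesis that \cite{eckstein1992douglas} requires). One small point of care in the multiplier-existence step: affineness of the constraint alone does not guarantee a Lagrange multiplier when $f$ is merely convex rather than polyhedral --- one also needs a feasible point in $\operatorname{ri}(\dom f)$, which holds here because differentiability of each $f_n$ at $x_\star$ forces $x_\star\in\operatorname{int}(\dom f_n)$; alternatively, existence follows purely linear-algebraically from Lemma~\ref{lm-span}, since $\sum_n\nabla f_n(x_\star)=0$ is exactly the orthogonality condition making the system $M^*\bar\lambda=-\nabla f(\un_N\otimes x_\star)$, $P\bar\lambda=0$ solvable.
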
 

We now make the ADMM equations more explicit and show how they lead to a 
distributed implementation. The $x$ and $z$ -- update equations above can be 
respectively rewritten as 
\begin{align} 
x_{k+1} &= \argmin_{x\in\RR^{NK}} f(x) + 
  \frac{\rho}{2} \| Mx - (z_k - \lambda_k/\rho) \|^2 , \label{xk} \\
z_{k+1} &= \argmin_{z\in\RR^{TK}} g(z) + 
  \frac{\rho}{2} \| z - (M x_{k+1} + \lambda_k/\rho) \|^2.  \label{zk} 
\end{align} 
Let us partition $z_k$ as in the definition of the function $g$ above and write 
$z_k = (z_k^{(1)}, \ldots, z_k^{(L)})$. Accordingly, let us write 
$\lambda_k = (\lambda^{(1)}_k, \ldots, \lambda^{(L)}_k)$ where
$\lambda^{(\ell)}_k \in \RR^{|A_\ell|K}$ and furthermore, let us write 
$\lambda^{(\ell)}_k = (\lambda^{(\ell)}_k(n_1), \lambda^{(\ell)}_k(n_2), 
\ldots, \lambda^{(\ell)}_k(n_{|A_\ell|}))$
where $\lambda^{(\ell)}_k(n_i) \in \RR^K$ and where $n_i$ is the column index
of the non zero element of the row $i$ of $S_{A_\ell}$. The indices $n_i$ of
the elements of $\lambda^{(\ell)}_k$ are therefore the indices of the agents 
belonging to the set $A_\ell$. Using these notations, Equation~\eqref{zk} can 
be parallelized into $L$ equations of the form
\[
z_{k+1}^{(\ell)} = \argmin_{z\in\RR^{|A_\ell|K}} \imath_{C_{|A_\ell|}}(z) 
+ \frac{\rho}{2} \| z 
           - ({\cal S}_{A_\ell} x_{k+1} + \lambda_k^{(\ell)}/\rho) \|^2 , 
\]
whose solution is $z_{k+1}^{(\ell)} = 
\un_{|A_\ell|} \otimes \bar z_k^{(\ell)} \in C_{|A_\ell|}$ with 
\[
\bar z_{k+1}^{(\ell)} = \frac{1}{|A_\ell|} \sum_{n\in A_\ell} 
\Bigl( x_{k+1}(n) + \frac{\lambda_k^{(\ell)}(n)}{\rho} \Bigr) . 
\]
Turning to the $\lambda$ -- update equation in the ADMM iterations and 
inspecting the structure of the matrix $M$, this equation can be decomposed
into the equations 
\begin{equation} 
\label{adm-update-lambda} 
\lambda^{(\ell)}_{k+1}(n) = \lambda^{(\ell)}_{k}(n) + 
\rho ( x_{k+1}(n) - \bar z^{(\ell)}_{k+1} ) 
\end{equation} 
for $\ell = 1,\ldots, L$ and $n = 1,\ldots, N$. Fixing $\ell$ and taking the
sum of the $\lambda^{(\ell)}_{k+1}(n)$ with respect to $n$ yields 
$\sum_{n\in A_\ell} \lambda^{(\ell)}_{k+1}(n) = 0$. Therefore, the 
$\bar z_{k}^{(\ell)}$ update equation can be written after the first iteration 
as
\begin{equation}
\label{adm-update-z} 
\bar z_{k+1}^{(\ell)} = \frac{1}{|A_\ell|} \sum_{n\in A_\ell} x_{k+1}(n) . 
\end{equation}
Getting back to Equation~\eqref{xk}, we now see that it can be parallelized 
into $N$ equations of the form
\begin{equation}
\label{adm-update-x} 
x_{k+1}(n) = \argmin_{w\in \RR^K}  
f_n(w) + \!\! \sum_{m\in\sigma(n)} \!\! \langle \lambda_k^{(m)}(n), w \rangle 
+ \frac{\rho}{2} \| w - \bar z_k^{(m)} \|^2 
\end{equation} 
for $n = 1,\ldots, N$, where $\sigma(n) = \{ m \,: \,  n \in A_m \}$. 
{Let us introduce the following aggregate quantities:
\begin{eqnarray*}
  \Delta_k(n) &=& \frac{1}{\rho|\sigma(n)|}\sum_{m\in\sigma(n)}\lambda_k^{(m)}(n) \nonumber \\
  \chi_k(n) &=& \frac 1{|\sigma(n)|}\sum_{m\in\sigma(n)}\bar z^{(m)}_{k} \label{eq:adm-update-chi}
\end{eqnarray*}
After some algebra, the $x$-update in (\ref{adm-update-x}) simplifies to
\begin{equation}
  \label{eq:adm-update-x-aggrege}
  x_{k+1}(n) = \prox_{\frac{f_n}{\rho|\sigma(n)|}}\left(\chi_k(n)-\Delta_k(n)\right)\,.
\end{equation}
% Clearly, for each $n$,
% the update only depends on the aggregate Lagrange multipliers
% $
% \Lambda_k(n) = \frac{1}{|\sigma(n)|}\sum_{m\in\sigma(n)}\lambda_k^{(m)}(n)
% $
By~(\ref{adm-update-lambda}), we have the update equation
\begin{equation}
  \label{eq:adm-update-Delta}
  \Delta_{k+1}(n) = \Delta_k(n)  + x_{k+1}(n) - \chi_{k+1}(n) \,.
\end{equation}
We are now in position to state the main algorithm.

\subsection{Distributed ADMM (General case)}

All agents within a subset
$A_\ell$ are assumed to be connected together through a communication
network. Recall that for a given $n$, $|\sigma(n)|$ is the number of clusters
to which Agent $n$ belongs.

Before entering the iteration $k+1$, Agent $n$ holds in its memory the
values $x_k(n)$, $\chi_k(n)$ and $\Delta_k(n)$. 
\smallskip

\begin{breakbox}
\noindent {\bf Distributed-ADMM (General case)}\\
\noindent  At Iteration $k+1$, 
  \begin{enumerate}
  \item The agents $1,\ldots, N$ compute their estimates
    $x_{k+1}(n)$ %according to Equation~\eqref{adm-update-x}:
$$
x_{k+1}(n) =
\prox_{\frac{f_n}{\rho|\sigma(n)|}}\left(\chi_k(n)-\Delta_k(n)\right)\,.
$$
\item For all $\ell=1,\ldots, L$, the agents belonging to a cluster
  $A_\ell$ send their estimates $x_{k+1}(n)$ to a ``cluster head'' who
  can be a preselected member of $A_\ell$ or an independent
  device. The cluster head computes
$$
\bar z_{k+1}^{(\ell)} = \frac{1}{|A_\ell|} \sum_{n\in A_\ell}
x_{k+1}(n) .
$$
% $\bar z_{k+1}^{(\ell)}$ according to Equation~\eqref{adm-update-z}
and sends back this parameter to all the members of the cluster.
\item For $n=1,\ldots, N$, Agent $n$ computes
  \begin{eqnarray*}
    \chi_k(n) &=& \frac 1{|\sigma(n)|}\sum_{m\in\sigma(n)}\bar z^{(m)}_{k}\\
    \Delta_{k+1}(n) &=& \Delta_k(n)  + x_{k+1}(n) - \chi_{k+1}(n) \,.
  \end{eqnarray*}
  % $\Delta_{k+1}(n)$ according to Equation \eqref{adm-update-Delta}.
\end{enumerate}
\end{breakbox}
\smallskip

Note that in the absence of a 
cluster head, one can think of a distributed computation of 
$\bar z_{k+1}^{(\ell)}$ within the cluster using \emph{e.g.} a gossiping 
algorithm. 

Note also that when $|A_\ell| = 2$ for all $\ell=1,\ldots, L$, no cluster
head nor a gossiping algorithm are needed for the execution of
Step~2. Assuming that $A_\ell = \{m, n\}$, Agents $m$ and $n$ exchange
the values of $x_{k+1}(m)$ and $x_{k+1}(n)$ then they both compute
$\bar z_{k+1}^{(\ell)} = (x_{k+1}(m) +x_{k+1}(n)) / 2$. In this case,
the algorithm is fully distributed at the agents level.  This point is
discussed in the next paragraph. }

% These equations can be implemented distributively. The algorithm requires 
% the couple $(z_0, \lambda_0)$ as initial values. 
% \thirdver{Before entering the iteration $k+1$, Agent $n$ holds in its memory
% the $\bar z_k^{(m)}$ and the $\lambda_k^{(m)}(n)$ for all $m \in \sigma(n)$. 
% All agents within a subset $A_\ell$ are assumed to be connected together 
% through a communication network.

\subsection{Distributed ADMM (Special cases)}
\label{sec:examplesCentralizedRing}

We end this section by two important examples of possible choices of the 
subsets $A_\ell$. We shall come back to these examples later. 
\paragraph*{Example 1} This is the centralized ADMM described in 
\cite[Chap. 7]{boyd2011distributed}. 
Let $L = 1$ and $A_1 = {\cal A}$. Problem \eqref{pb-reseau} becomes 
$\inf_{x\in \RR^{NK}} f(x) + \imath_{C_N}(x)$. At Iteration $k+1$, a dedicated 
device simply computes $\bar z_{k+1} = N^{-1} \sum_{n=1}^N x_{k+1}(n)$ and 
broadcasts it to all the agents. 

\paragraph*{Example 2} Here we assume that all the agents belong to a 
communication network represented by a non
oriented graph with no self loops $G = ({\cal A}, E)$ where 
$E = \{ \{ n_1, m_1 \}, \{n_2, m_2 \}, \ldots \}$ is the set of edges. 
Setting $L = | E |$, we consider that any pair of agents $\{ n, m \}$ such that 
$\{ n, m \} \in E$ is a set $A_\ell$. 
%Here, Problem \eqref{pb-reseau} is written $\inf_{x\in \RR^{NK}} f(x) + \sum_{ \{ n, m \} \in E}  \imath_{C_2} ( x(n), x(m) )$. 
If the graph $G$ is connected, then Assumption 
\ref{ass-equiv} is easily seen to be verified, and Problems \eqref{problem} 
and \eqref{pb-reseau} are equivalent. 
{In this situation, $|\sigma(n)|$ simply coincides with the \emph{degree $d_n$ of
node $n$} in the graph \emph{i.e.}, the number of its neigbors.
For every edge $A_\ell=\{n,m\}$, $\bar z^{(\ell)}_k = (x_k(n)+x_k(m))/2$ is simply the average of the nodes' estimates on that edge.
As a consequence, it is straightforward to show that
$\chi_k(n) = (x_k(n)+\bar x_k(n))/2$ where 
$\bar x_k(n)=\frac 1{d_n}\sum_{m\in {\cal N}_n}x_k(m)$ and ${\cal N}_n$ is the neighborhood of node $n$.

This leads to the following algorithm. An Agent $n$ keeps the variables $x_k(n),\bar x_k(n), \Delta_k(n)$
at each time $k$.
\smallskip

\begin{breakbox}
\noindent {\bf Distributed-ADMM (clusters are edges)}\\
\noindent  At Iteration $k+1$, Agent $n$ 
  \begin{enumerate}
  \item computes its estimate
$$
x_{k+1}(n) =
\prox_{\frac{f_n}{\rho d_n}}\left(\frac{x_k(n)+\bar x_k(n)}2-\Delta_k(n)\right)\,,
$$
\item receives the estimates $x_k(m)$ of other Agents $m\in {\cal N}_n$ in its neighborhood and computes
$$
\bar x_k(n)=\frac 1{d_n}\sum_{m\in {\cal N}_n}x_k(m)\,,
$$
\item updates $\Delta_{k+1}(n) = \Delta_k(n) + (x_{k+1}(n)-\bar x_{k+1}(n))/2$.
\end{enumerate}
\end{breakbox}
\smallskip

In this special case, the algorithm boils down to the algorithm of \cite{shi-etal-(arxiv)13}.}

\section{Main result} 
\label{sec:main}

We now come to the main result of this paper. Define the $T\times T$ 
orthogonal projection matrix 
\[
\Pi = \begin{bmatrix} J_{|A_1|} & \\ & \ddots \\ & & J_{|A_L|} \end{bmatrix} 
\]
and consider the $TK \times TK$ orthogonal projection matrix 
\[
P = \Pi \otimes I_K = \begin{bmatrix} P_{|A_1|} 
\\ & \ddots \\ & & P_{|A_L|} \end{bmatrix}  . 
\]
Define the $TK \times TK$ matrix 
\begin{align} 
Q &= 
\rho M \left( 
\begin{bmatrix} \nabla^2 f_1(x_\star) \\ & \ddots \\ & & 
\nabla^2 f_N(x_\star) \end{bmatrix} + \rho M^* M \right)^{\!\!\!-1} M^* 
\nonumber \\
&=  
\rho M \left( \nabla^2 f( \un_N \otimes x_\star) + \rho M^* M \right)^{-1} M^* 
\label{def-Q} 
\end{align} 
Finally, denote by $\colsp(\cdot)$ and by $\sprad(\cdot)$ respectively the 
column space and the spectral radius of a matrix. 

\begin{theorem}
\label{main} 
Let Assumptions \ref{conv-lsc} to \ref{ass-equiv} hold true. 
Let $\alpha = \sprad( (\Pi_{\colsp(P+Q)} - (P+Q) )(I-2P) )$  
where $\Pi_{\colsp(P+Q)}$ is the orthogonal projection matrix on $\colsp(P+Q)$.
Then the following facts hold true: 
\begin{itemize}
\item[i)] 
$\alpha < 1$, 
\item[ii)] 
For any initial value $(z_0, \lambda_0)$ of ADMM, 
\[
\limsup_{k\to\infty} \frac 1k \log \| x_k - \un_N \otimes x_\star \| 
\leq \log\alpha , 
\]
\item[iii)] 
The matrix $R = (I - P - Q)(I-2P)$ has eigenvalues with absolute value 
$\alpha$. If $\colsp((I-2P) M)$ is not orthogonal to the invariant subspace
% of $(\Pi_{\colsp(P+Q)} - (P+Q) )(I-2P)$ associated with the eigenvalues of
% this matrix whose absolute value is $\alpha$, then 
of $R$ associated with these eigenvalues, then 
\[
\limsup_{k\to\infty} \frac 1k \log \| x_k - \un_N \otimes x_\star \| 
= \log\alpha  
\]
for any initial value $(z_0, \lambda_0)$ outside a set of Lebesgue measure equal to zero.
\end{itemize} 
\end{theorem}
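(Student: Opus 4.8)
The plan is to treat the quadratic case first, where one full ADMM sweep collapses to an affine recursion whose rate is read off from the spectrum of $R=(I-P-Q)(I-2P)$, and then to transfer the conclusion to the general case by a perturbation argument anchored at the limit supplied by Theorem~\ref{1st-ord}. First I would assume each $f_n$ quadratic with Hessian $\nabla^2 f_n(x_\star)$, so that the $x$-update~\eqref{xk} becomes the affine solve $Mx_{k+1}=Q(z_k-\lambda_k/\rho)-c$ with $Q$ as in~\eqref{def-Q} and $c$ a constant gathering the gradients at $x_\star$. The key book-keeping fact is that after the first iteration the $\lambda$-update forces $\lambda_k\in\ker P$ while the $z$-update keeps $z_k\in\colsp P$; hence the pair $(z_k,\lambda_k)$ is encoded by the single variable $w_k:=z_k-\lambda_k/\rho$, with $z_k=Pw_k$ and $\lambda_k/\rho=-(I-P)w_k$. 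Substituting the three updates and setting $\zeta_k:=(I-2P)w_k$ collapses the sweep to $\zeta_{k+1}=R\zeta_k+c$. Since $0\le P\le I$ and $0\le Q\le I$ (the latter because $Q\le \rho M(\rho M^*M)^{-1}M^*=\Pi_{\colsp M}$, using $M^*M$ invertible), the factor $I-P-Q$ is nonexpansive and $I-2P$ is an orthogonal reflection, so $R$ is nonexpansive.

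Next I would analyze the spectrum of $R$. Because quadratics satisfy Assumptions~\ref{conv-lsc}--\ref{ass-equiv}, Theorem~\ref{1st-ord} forces $\zeta_k$ to converge for every initialization; hence $R^k$ converges on a spanning set, which rules out any unit-modulus eigenvalue other than $+1$ and shows that $1$ is semisimple. Consequently $\RR^{TK}$ splits as $E_1=\ker(R-I)$ and a complementary $R$-invariant space $E_{<1}$ on which the spectral radius is strictly below $1$, and $\zeta_k-\zeta_\star=R^k(\zeta_0-\zeta_\star)$ with $\zeta_0-\zeta_\star\in E_{<1}$. The point is to identify this sub-dominant radius with $\alpha=\sprad\bigl((\Pi_{\colsp(P+Q)}-(P+Q))(I-2P)\bigr)$: writing $G$ for that matrix, one has $R-G=(I-\Pi_{\colsp(P+Q)})(I-2P)$, and on $\ker(P+Q)=\colsp(P+Q)^\perp$ both $P$ and $Q$ vanish, so $R$ acts as the identity while $G$ acts as $0$. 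Thus $G$ annihilates the neutral directions while retaining the contracting ones, giving $\sprad(G)<1$ and part~(i). The identification $E_1=\ker(P+Q)$ is exactly where Assumption~\ref{nabla2} enters: the only candidate neutral mode outside $\ker(P+Q)$ is the consensus direction $\eta=\un_T\otimes v=M(\un_N\otimes v)$, for which $Q\eta=\eta=P\eta$ holds iff $v\in\bigcap_n\ker\nabla^2 f_n(x_\star)$, and $\sum_n\nabla^2 f_n(x_\star)>0$ makes this intersection trivial. Transferring back through $\zeta_k=(I-2P)w_k$ and the affine $x$-solve yields the quadratic-case rate, which equals $\log\alpha$.

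For general $f_n$ I would start from the convergence $(x_k,z_k,\lambda_k)\to(\un_N\otimes x_\star,\un_T\otimes x_\star,\lambda_\star)$ of Theorem~\ref{1st-ord} and Taylor-expand each proximity step around $x_\star$ using the twofold differentiability in Assumption~\ref{nabla2}, producing $\zeta_{k+1}=R\zeta_k+c+\varepsilon_k$ with $\varepsilon_k=o(\|\zeta_k-\zeta_\star\|)$ and, crucially, $\varepsilon_k\in\colsp((I-2P)M)$, since the gradient discrepancy enters the dynamics only through $Mx_{k+1}$. The upper bound~(ii) then follows from a standard estimate on the perturbed recursion: for any $\beta\in(\alpha,1)$ the contracting part decays like $\beta^k$ and the $o(\cdot)$ term cannot overcome it once $k$ is large. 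For the lower bound~(iii) I would project the error onto the invariant subspace of $R$ at modulus $\alpha$; the non-orthogonality hypothesis on $\colsp((I-2P)M)$ ensures this projection is generically excited, and the set of initializations for which it vanishes is a proper subspace, hence Lebesgue-null.

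The main obstacle is the non-quadratic step: showing that the vanishing perturbation $\varepsilon_k$ neither slows nor accelerates the exponential rate. The upper bound is comparatively safe, but the \emph{exact} rate in~(iii) requires that $\varepsilon_k$, although $o(\|\zeta_k-\zeta_\star\|)$, does not secularly cancel the slowest decaying mode over infinitely many iterations; this is precisely why one must know that the perturbation lives in $\colsp((I-2P)M)$ and control how this subspace meets the eigenspace of $R$ at modulus $\alpha$. Establishing the strict inequality $\alpha<1$ in~(i), i.e.\ excluding neutral modes beyond the fixed ones, is the other delicate place, and it is there that Assumption~\ref{nabla2} is used decisively.
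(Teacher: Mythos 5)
Your overall architecture is the paper's own: collapse the quadratic case to an affine recursion in a variable equivalent to $\zeta_k=\lambda_k+\rho z_k$ (your $w_k=z_k-\lambda_k/\rho$ satisfies $\rho w_k=-(I-2P)\zeta_k$, so your parametrization is the paper's up to an invertible change of variable), read the rate off the spectrum of $R$, and treat general $f_n$ as a vanishing perturbation of that recursion anchored at the convergence supplied by Theorem~\ref{1st-ord}. Your shortcut for the spectral facts --- deducing from Theorem~\ref{1st-ord} that $R^k$ converges pointwise, hence that $R$ has no unit-modulus eigenvalue other than a semisimple $1$ --- is legitimate and is explicitly sanctioned by the paper (remark before Lemma~\ref{lm-R}), which gives a direct proof via symmetric-matrix arguments instead; note however that power convergence only gives semisimplicity \emph{at} $1$, not of the sub-unit eigenvalues, which the paper also establishes and uses in its spectral factorization.

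The genuine gap is in the step you yourself call decisive: the identification $\ker(R-I)=\ker(P+Q)$, i.e.\ Lemma~\ref{lm-N} of the paper. A vector $w$ with $Rw=w$ satisfies $(Q-P)\zeta=0$ for $\zeta=(I-2P)w$, and such a $\zeta$ decomposes as $\zeta=\lambda+\rho z$ with a multiplier part $\lambda=P_\perp\zeta\in\ker P$ and $\rho z=P\zeta=Q\zeta$. You only test candidates of the pure form $\eta=\un_T\otimes v$, which amounts to setting $\lambda=0$; the dangerous neutral modes are the mixed ones with $\lambda\neq 0$ and $z\neq 0$, and ruling these out is where the real algebra lies. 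Concretely: from $Q\zeta=\rho z$ one gets $z\in\colsp(M)\cap\colsp(P)$, which equals $\colsp(\un_T)\otimes\RR^K$ only through the connectivity Assumption~\ref{ass-equiv} (via Lemma~\ref{lm-span}) --- a hypothesis your argument never invokes --- so $z=\un_T\otimes q$; then full column rank of $M$ yields $M^*\lambda=\Phi(\un_N\otimes q)$, and left-multiplying by $\un_N^*\otimes I_K$ together with $S\un_N=\un_T$ and $(I_T-\Pi)\lambda=\lambda$ gives $\sum_n\nabla^2 f_n(x_\star)\,q=0$, whence $q=0$ by Assumption~\ref{nabla2}. Without this computation your assertion that ``the only candidate neutral mode outside $\ker(P+Q)$ is the consensus direction'' is unsupported, and item \emph{i)} is not yet proved. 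Two smaller slips in the general case: the perturbation actually lies in $\colsp(M)$ (it equals $-\rho MH^{-1}E(x_{k+1}-\xlim)$), and what renders it harmless is its orthogonality to ${\cal N}=\ker(R-I)\subset\ker M^*$ --- it never excites the eigenvalue-$1$ modes --- whereas the relation of $\colsp((I-2P)M)$ to the modulus-$\alpha$ eigenspace governs only the tightness condition in \emph{iii)}; and in \emph{iii)} the dominant eigenvalues may be complex, so the projected error oscillates and one still needs the paper's argument that $\limsup_k|\sum_\ell a_\ell e^{\imath\theta_\ell k}|>0$ (via Ces\`aro averaging) before the lower bound can be concluded.
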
 
This theorem says that $\| x_k - \un_N \otimes x_\star \| < 
(\alpha + o(1))^k$, and provides a condition under which this rate is tight. 
It means that, for $(z_0, \lambda_0)$ outside a set with zero Lebesgue 
measure, $\| x_k - \un_N \otimes x_\star \|=(\alpha + o(1))^k$.
Moreover, Item iii) states that this rate is \emph{tight} as soon as matrices $P$ and $M$ satisfies a technical condition.
Although providing deeper insights on this condition is a difficult task, we claim that the latter condition is mild. 
It is for instance trivially satisfied in the case of a centralized network (see Example~1 in Section~\ref{sec:examplesCentralizedRing}).
% In general, we have not been able to find any counter-example for 
% which this condition is not satisfied.

\section{Special cases}
\label{sec-examples} 

The aim of this section is to provide some examples of the rate $\alpha=
\sprad( (\Pi_{\colsp(P+Q)} - (P+Q) )(I-2P) )$ in simple scenarios.  We assume
for simplicity that the dimension $K$ of the parameter space is equal to one,
in which case we have $M = S$ and $P = \Pi$.  Moreover, we investigate the case
where $f''_n(x_\star)=\sigma_\star^2>0$ is a constant which does not depend on
$n$.  Remark that this assumption is not mandatory, but has the benefit of
yielding simple and insightful expressions.

\subsection{The centralized network}
\label{subsec-centralized} 

We consider here the simple configuration of Example~$1$ of 
Section~\ref{subsec-decentralized}, where $L = 1$ and $A_1 = {\cal A}$. 
This case amounts to assuming that $M = I_N$, $P = N^{-1} \un_N \un_N^*$
and $Q=\frac{\rho}{\sigma_\star^2+\rho}I_N$. The projector $\Pi_{\text{span}(P+Q)}$
is the identity and the rate of convergence $\alpha$ of 
ADMM coincides with the spectral radius of 
\begin{align*} 
R &= \Bigl( I_N - P - Q \Bigr)\Bigl( I_N - 2P \Bigr) \\
 &=   \frac{\sigma_\star^2}{\sigma_\star^2+\rho}(I_N-P) +\frac{\rho}{\sigma_\star^2+\rho}P \,.
\end{align*} 
Matrix $R$ has two possibly distinct eigenvalues $\frac{\sigma_\star^2}{\sigma_\star^2+\rho}$ and 
$\frac{\rho}{\sigma_\star^2+\rho}$ both of them less than one. We have the following corollary.

\begin{coro}[Centralized network]
Under the stated assumptions, the rate is given by: 
  \begin{equation}
    \label{eq:1}
    \alpha = \frac{\max(\rho,\sigma_\star^2)}{\rho+\sigma_\star^2}\,.
  \end{equation}
In particular, $\alpha\geq \frac 12$ with equality iff $\rho=\sigma_\star^2$.
\label{coro:centralized}
\end{coro}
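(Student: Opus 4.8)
The plan is to exploit the fact that the matrix $R$ has already been written, in the paragraph preceding the corollary, in a form that is manifestly a spectral decomposition. Specifically, I would start from
\[
R = \frac{\sigma_\star^2}{\sigma_\star^2+\rho}(I_N-P) + \frac{\rho}{\sigma_\star^2+\rho}P ,
\]
and observe that $P = N^{-1}\un_N\un_N^*$ is the orthogonal projection onto $\colsp(\un_N)$. Consequently $\{P,\, I_N-P\}$ is a resolution of the identity into two mutually orthogonal, complementary eigenspaces, so $R$ is self-adjoint and its action is diagonal in this decomposition: it equals multiplication by $\frac{\rho}{\sigma_\star^2+\rho}$ on $\colsp(P)$ (a one-dimensional space) and by $\frac{\sigma_\star^2}{\sigma_\star^2+\rho}$ on its $(N-1)$-dimensional orthogonal complement. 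Hence these two numbers are exactly the eigenvalues of $R$, with the stated multiplicities.

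The next step is to compute the spectral radius. Since both eigenvalues are nonnegative, $\sprad(R)$ is simply the larger of the two, which gives
\[
\alpha = \sprad(R) = \max\!\left(\frac{\sigma_\star^2}{\sigma_\star^2+\rho},\, \frac{\rho}{\sigma_\star^2+\rho}\right) = \frac{\max(\rho,\sigma_\star^2)}{\rho+\sigma_\star^2},
\]
which is precisely \eqref{eq:1}. Here I would invoke Theorem \ref{main}: in this centralized configuration $\Pi_{\colsp(P+Q)}$ is the identity, so the rate $\alpha$ of the theorem coincides with $\sprad(R)$, and the identification above is legitimate.

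For the final assertion I would use the elementary observation that the two eigenvalues sum to one, since $\frac{\sigma_\star^2}{\sigma_\star^2+\rho} + \frac{\rho}{\sigma_\star^2+\rho} = 1$. For any pair of nonnegative reals summing to $1$ the maximum is at least $\tfrac12$, with equality precisely when the two are equal; applied here this yields $\alpha \geq \tfrac12$, with equality if and only if $\sigma_\star^2 = \rho$. I do not expect any genuine obstacle: the substantive work (the evaluation of $Q$ and $P$ and the reduction of $R$ to the projector form above) has already been carried out before the corollary, so what remains is only the reading off of eigenvalues from an orthogonal decomposition and a one-line inequality. The only point deserving a word of care is the justification that $\{P, I_N-P\}$ furnishes genuinely orthogonal eigenspaces, i.e. that $R$ is normal, so that the spectral radius is attained as a true eigenvalue magnitude rather than being obscured by a non-diagonalizable structure; this is immediate from the orthogonality of $P$.
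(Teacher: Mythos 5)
Your proposal is correct and follows essentially the same route as the paper: the text preceding the corollary already reduces $R$ to the form $\frac{\sigma_\star^2}{\sigma_\star^2+\rho}(I_N-P)+\frac{\rho}{\sigma_\star^2+\rho}P$, reads off the two eigenvalues from the orthogonal projector decomposition (using $\Pi_{\colsp(P+Q)}=I$ so that $\alpha=\sprad(R)$), and the corollary follows by taking the maximum of two nonnegative numbers summing to one. Your only addition is to make explicit the normality of $R$ and the one-line inequality $\max(a,1-a)\geq\tfrac12$ with equality iff $a=\tfrac12$, which the paper leaves implicit.
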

Corollary~\ref{coro:centralized} states that the optimal convergence rate is $1/2$ and that 
this rate is attained when the step-size of the algorithm is equal to $\sigma_\star^2$.
In general, $\sigma_\star^2$ is unknown, but the result nevertheless provides useful guidelines on the way to select parameter $\rho$,
and potentially allows to design adaptive step-size selection strategies.

It is worth noting that a closed-form expression of the rate $\alpha$ can as well be obtained in the case where
the second order derivatives $f''(x_\star)$ are distinct. The analysis is however somewhat tedious, as the latter closed-form expression
depends on the location of $\rho$ on the real axis. We shall discuss this case below in the numerical section.

% \begin{remark}[About the relaxation]
% In the case of relaxation, 
% $$ \alpha = \max\left\{ \left| 1-\eta \frac{\sigma_\star^2}{\sigma_\star^2 + \rho} \right|  ,  \left| 1-\eta \frac{\rho}{\sigma_\star^2 + \rho}   \right| \right\}  $$ 
% the equality case being attained for $\eta = 2$ giving $\alpha = |\frac{\sigma_\star^2-\rho}{\sigma_\star^2 + \rho}|$. 
%
% Again, the fastest speed is attained with  $\rho=\sigma_\star^2$, but now $\alpha = 0$ so the convergence is attained in two iterations (this case was already remarked in \cite{ghadimi2013}).
% \end{remark}

\subsection{The ring network}
\label{subsec-ring} 

We now assume that $N \geq 3$ and consider the framework of Example $2$ of 
Section~\ref{subsec-decentralized}. In that framework, the graph 
$G = ({\cal A}, E)$ that we study here is the ring network with 
$E = \{ \{ 1,2 \},$ $\{2,3\},\ldots, \{N-1, N\}, \{ N, 1 \} \}$
as the set of edges. We therefore have $L=N$ and 
\[
A_\ell = \left\{ \begin{array}{ll} \{ \ell, \ell + 1 \} 
& \text{if} \ \ell < N \\ 
\{ 1, N \} & \text{otherwise}   
\end{array} \right.  
\]
as shown in Figure \ref{fig:ring}. 
\begin{figure}[h]
  \centering
  \includegraphics[width=0.4\linewidth]{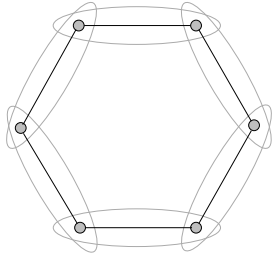}
  \caption{Example of a ring network with $N=6$. Sets $A_\ell$ 
are represented by the ellipses.} 
  \label{fig:ring}
\end{figure}
% We also assume that $\nabla^2 f(\un_N \otimes x_\star) = \sigma_\star^2 I_N$ 
% for some $\sigma_\star^2 > 0$. \\ 

We define for simplicity $s_N=\sin(2\pi/N)$, $c_N=\cos(2\pi/N)$ and $t_N=\tan(2\pi/N)$.
\begin{coro}[Ring network]
  Under the stated assumptions, the rate $\alpha=\alpha(\rho)$ is given by the following expression.
  \begin{itemize}
  \item If $\displaystyle{\rho \leq \frac{\sigma_\star^2}{2s_N}}$, then
    \[
    \alpha = \frac{\sigma_\star^2 + 2\rho\left( 1 +c_N\right) + \sqrt{\sigma_\star^4 - 4 \rho^2s_N^2}} {2(\sigma_\star^2 + 2\rho)} ,
    \]
  \item If $\displaystyle{\rho \in \Bigl[\frac{\sigma_\star^2}{2 s_N}, \frac{\sigma_\star^2}{2 t_N^2} \Bigr]
    }$, then
    \[
    \alpha = \sqrt{\frac{\rho(1+c_N)}{\sigma_\star^2 + 2\rho} } ,
    \]
  \item If $\rho\geq \frac{\sigma_\star^2}{2 t_N^2}$ then
    \[
    \alpha = \frac{2\rho}{\sigma_{\star}^2 + 2\rho} .
    \]
  \end{itemize}
For any $N\geq 3$, the function $\rho\mapsto \alpha(\rho)$ is continuous, decreasing on $(0,\frac{\sigma_\star^2}{2s_N}]$,
increasing on $[\frac{\sigma_\star^2}{2s_N},+\infty)$. Finally, 
$$
\alpha\geq \alpha_{\text{opt}} := \frac{1}{\sqrt 2}\sqrt{\frac{1+c_N}{1+s_N}}
$$ 
with equality iff $\rho = \frac{\sigma_\star^2}{2s_N}$.
\label{coro:ring}
\end{coro}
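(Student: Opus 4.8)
The plan is to exploit the circulant symmetry of the ring in order to reduce the computation of $\alpha=\sprad\bigl((\Pi_{\colsp(P+Q)}-(P+Q))(I-2P)\bigr)$ to a family of $2\times 2$ problems indexed by the Fourier modes of $\ZZ/N$. First I would make the matrices explicit. With $K=1$ we have $M=S$ and $P=\Pi$; since every node of the ring has degree two, $M^*M=S^*S=2I_N$, and together with $\nabla^2 f(\un_N\otimes x_\star)=\sigma_\star^2 I_N$ this gives $Q=q\,SS^*$ with $q\eqdef \rho/(\sigma_\star^2+2\rho)\in(0,\tfrac12)$. Up to relabelling the $2N$ edge-endpoints, $SS^*=I_{2N}+W$, where $W$ is the involution pairing the two endpoints incident to a common node. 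The decisive observation is that the cyclic shift $n\mapsto n+1$ of the ring permutes the edges cyclically and leaves $P$, $Q$ and $I-2P$ invariant; writing the space as $\CC^2\otimes\CC^N$ (endpoint-type $\otimes$ node-index), the shift is $I_2\otimes C$ with $C$ the circular shift, while $P=J_2\otimes I_N$ and $W=\bigl[\begin{smallmatrix}0&C\\C^*&0\end{smallmatrix}\bigr]$. All three matrices commute with $I_2\otimes C$, hence are simultaneously block-diagonalised by the discrete Fourier transform of the $\CC^N$ factor into $N$ blocks of size $2$.

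Next I would compute these blocks. For the mode $k\in\{0,\dots,N-1\}$, writing $c_k\eqdef\cos(2\pi k/N)$, the block of $I-2P$ is $\tau=\bigl[\begin{smallmatrix}0&-1\\-1&0\end{smallmatrix}\bigr]$ and the block of $P+Q$ has determinant $q(1-c_k)$. For $k\neq0$ this block is invertible, so $\Pi_{\colsp(P+Q)}$ restricts to the identity and the relevant block is $R_k\eqdef(I-P-Q)(I-2P)$, for which $\tr R_k=1+2qc_k$ and $\det R_k=q(1+c_k)$. The mode $k=0$ must be treated apart, since there $P+Q$ is rank one and $\Pi_{\colsp(P+Q)}$ genuinely differs from the identity by killing the consensus direction; a direct computation then gives a block with eigenvalues $2q$ and $0$. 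Thus the mode-$0$ contribution to $\alpha$ is exactly $2q=2\rho/(\sigma_\star^2+2\rho)$, which is the third stated expression.

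Since the Fourier transform is unitary, $\alpha$ equals the largest of the per-mode spectral radii. Denoting by $g(c)$ the radius of the block as a function of $c=c_k$, solving $\mu^2-(1+2qc)\mu+q(1+c)=0$ gives $g(c)=\tfrac12\bigl(1+2qc+\sqrt{1-4q+4q^2c^2}\bigr)$ when the discriminant $1-4q+4q^2c^2$ is nonnegative, and $g(c)=\sqrt{q(1+c)}$ otherwise. The heart of the argument is to show that, over the nonzero modes, $g$ is maximised at $k=\pm1$, i.e. at $c=c_N=\cos(2\pi/N)$, the largest available cosine. I would establish this by checking that $g$ is nondecreasing in $c$ on $[0,1]$ in both regimes, and by dominating the only competing local maximum on the negative side—the mode $c=-1$ present when $N$ is even, of value $1-2q$—using $1-2q\le 2q$ for $q\ge\tfrac14$ and $1-2q\le g(c_N)$ for $q<\tfrac14$. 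Granting this, $\alpha=\max\{g(c_N),2q\}$, and the three regimes are delimited by two comparisons: whether the $k=\pm1$ block is real or complex (vanishing of its discriminant, which after substituting $q=\rho/(\sigma_\star^2+2\rho)$ simplifies to $\rho=\sigma_\star^2/(2s_N)$), and whether $g(c_N)\ge 2q$ (the equation $\sqrt{q(1+c_N)}=2q$). Using $\sigma_\star^4-4\rho^2=(\sigma_\star^2+2\rho)(\sigma_\star^2-2\rho)$ and $c_N^2=1-s_N^2$ then turns the two forms of $g(c_N)$ into the first and second stated expressions.

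Finally, the monotonicity claim and the value of $\alpha_{\text{opt}}$ would follow from these closed forms: in the regime where $k=\pm1$ is real, $g(c_N)$ is decreasing in $\rho$, whereas $\sqrt{q(1+c_N)}$ and $2q$ are increasing; continuity at the thresholds (where the branch formulas agree because the discriminant vanishes) then yields that $\alpha(\rho)$ decreases on $(0,\sigma_\star^2/(2s_N)]$ and increases afterwards, so the minimum sits at $\rho=\sigma_\star^2/(2s_N)$. Evaluating $\sqrt{q(1+c_N)}$ there and simplifying $\frac{1-s_N}{c_N^2}=\frac{1}{1+s_N}$ gives $\alpha_{\text{opt}}=\frac{1}{\sqrt2}\sqrt{\frac{1+c_N}{1+s_N}}$. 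I expect the genuine obstacle to be precisely the maximisation over modes: because $g$ is not globally smooth (the real/complex transition) and has a second local maximum near $c=-1$, reducing $\alpha$ to the two candidates $g(c_N)$ and $2q$ requires the monotonicity-plus-domination argument rather than a one-line convexity statement. A related point of care is the ordering of the two thresholds, which places the minimum at $\sigma_\star^2/(2s_N)$: this ordering requires the complex regime of the $k=\pm1$ mode to be non-degenerate, so the very small rings ($N\in\{3,4\}$, where that regime collapses) should be checked directly from $\alpha=\max\{g(c_N),2q\}$.
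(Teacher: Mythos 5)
Your proposal is correct and follows essentially the same route as the paper's appendix: exploit the circulant symmetry of the ring to block-diagonalise into $2\times 2$ Fourier blocks whose characteristic polynomial is $\mu^2-(1+2qc_k)\mu+q(1+c_k)$ with $q=\rho/(\sigma_\star^2+2\rho)$ (these are exactly the paper's $s_k$ and $D_k$), treat the mode $k=0$ apart, and maximise over modes. The only presentational difference at $k=0$ is that the paper keeps the unprojected block, factors its polynomial as $(\lambda-1)(\lambda-2q)$ and discards the eigenvalue $1$ via Lemma~\ref{vap-1}, whereas you apply $\Pi_{\colsp(P+Q)}$ inside the block and get eigenvalues $\{2q,0\}$ directly --- the same contribution $2q$; on the other hand, the maximisation over modes, which the paper dismisses as ``tedious but straightforward,'' is genuinely supplied by you (branch monotonicity in $c$, domination of the $c=-1$ local maximum $1-2q$ by $\max\{g(c_N),2q\}$), and this is the nontrivial content of the corollary. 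Your closing caveat is not merely prudent but substantive: solving your boundary equation $\sqrt{q(1+c_N)}=2q$ yields $q^\star=(1+c_N)/4$, i.e.\ $\rho^\star=\sigma_\star^2(1+c_N)/(2(1-c_N))=\sigma_\star^2/(2\tan^2(\pi/N))$, which agrees with the stated threshold only if $t_N$ is read as $\tan(\pi/N)$ rather than the paper's definition $\tan(2\pi/N)$ (the two coincide in absolute value only at $N=3$); moreover, the ordering $q^\star\geq \frac{1}{2(1+s_N)}$ needed to place the minimum at $\rho=\sigma_\star^2/(2s_N)$ amounts to $(1+c_N)(1+s_N)\geq 2$, which holds for $N\geq 4$ (with equality at $N=4$) but fails at $N=3$, where the mode-$0$ value $2q=1/(1+s_N)$ strictly exceeds $\frac{1}{\sqrt 2}\sqrt{(1+c_N)/(1+s_N)}$ at the claimed optimum --- so the small-$N$ check you flag is precisely where the paper's glossed-over analysis (and indeed the statement as written) is fragile.
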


\begin{proof}
  See Appendix.
\end{proof}

The optimal step-size $\rho_{\text{opt}} = \frac{\sigma_\star^2}{2s_N}$ is 
equal to $\frac{\sigma_\star^2N}{4\pi}+o(N)$ which suggests that the step-size 
should increase at the rate $N$. In that case, we get
\begin{equation}
\label{as-ring} 
\alpha_{\text{opt}}  = 1-\frac\pi N+o(\frac 1{N}). 
\end{equation}

Before closing this section, it is interesting to compare these results with
the speeds of some well-known algorithms used for solving the so-called
\emph{average consensus} problem (see the seminal work of \cite{boyd-it06} for
more details).  Given a network of $N$ agents each holding a measurement
$\theta_n$, the purpose of these agents is to reach a consensus over the
average $N^{-1} \sum_{n=1}^N \theta_n$. 
To attain this consensus, one idea is
to solve the problem $\min_x \sum_{n=1}^N (x-\theta_n)^2$ by means of ADMM.
This approach has been undertaken in~\cite{ers-etal-sp11}, where the authors
also obtained Equation~\eqref{as-ring} in the large $N$ asymptotic regime of
the ring network. 
{When synchronous \emph{gossip} algorithms for average
consensus is considered, it is proven in~\cite{car-automatica08} that $\alpha
\simeq 1 - 2 \pi^2 / N^2$ for large $N$ ring graphs.}  

\section{Proof of Theorem \ref{main}} 
\label{sec-prf} 

{
The proof is centered around the vector sequence 
$\zeta_k = \lambda_k + \rho z_k$ who can be shown to govern the evolution of
ADMM. Starting with the case where all the functions $f_n$ are quadratic, 
we show that an ADMM iteration boils down to the affine transformation 
$\zeta_{k+1} = R \zeta_k + d$ where $R$ is a $TK \times TK$ matrix with a 
certain structure. 
A spectral analysis of $R$ shows then that the 
rate of convergence of ADMM is provided by the largest modulus of the 
eigenvalues of $R$ different from one (this number is smaller that one). 
Part of the proof consists in characterizing the eigenspace associated with 
these eigenvalues. We then generalize our results to the case where the 
functions $f_n$ are not necessarily quadratic, but remain twice differentiable
in the vicinity of the minimum. In this case, we obtain a perturbed version
of an affine transformation similar to the quadratic case. The perturbation 
terms will be shown to lie in such a subspace of $R$ that the analysis of 
the quadratic case remains essentially effective. 
}

We start our proof by providing preliminary results that describe some simple
algebraic properties of the matrices $S$ and $\Pi$ {who will help us 
study the eigenstructure of $R$}. We then recall some well 
known properties of the \emph{proximity operator} which is known to be tightly 
related with ADMM. 
We then establish Theorem~\ref{main} in the quadratic case, ending our
proof with the general case.

%\subsection{Proof outline and some technical results} 
\subsection{Preliminary results}

We shall need to reformulate Assumption~\ref{ass-equiv} in a form that will be 
more conveniently used in the proof: 
\begin{lemma}
\label{lm-span}
Assumption \ref{ass-equiv} is equivalent to: 
\begin{itemize} 
\item[i)] $\rank(S) = N$, and 
\item[ii)] $\displaystyle{ \colsp(S) \cap \colsp(\Pi) = \colsp(\un_T) }$.  
\end{itemize} 
\end{lemma}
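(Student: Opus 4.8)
The plan is to split the claimed equivalence into its two coordinates, proving Assumption~\ref{ass-equiv}-\emph{i)} $\Leftrightarrow$ \emph{i)} and Assumption~\ref{ass-equiv}-\emph{ii)} $\Leftrightarrow$ \emph{ii)} independently. Since the second of these will turn out not to use the first, the equivalence of the two conjunctions follows immediately by combining them. No global properties of the functions $f_n$ enter here; the whole argument is a combinatorial/linear-algebraic statement about the matrices $S$ and $\Pi$.

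For the first coordinate I would analyze the kernel of $S$, regarded as the linear map $x \mapsto Sx$ from $\RR^N$ to $\RR^T$. Reading off the rows of $S$, the equation $Sx = 0$ forces $x_n = 0$ for every agent $n$ that appears in at least one block $A_\ell$, while leaving $x_n$ unconstrained for the remaining agents. Hence $\ker S = \{0\}$, equivalently $\rank(S)=N$, holds if and only if every agent lies in some $A_\ell$, that is $\bigcup_\ell A_\ell = \mathcal A$. This is the short part.

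For the second coordinate the key is to describe $\colsp(S)\cap\colsp(\Pi)$ explicitly. First I would record two anchoring facts: each row of $S$ carries a single one, so $S\un_N=\un_T$ and thus $\un_T\in\colsp(S)$; and $\Pi\un_T=\un_T$ since $J_{|A_\ell|}\un_{|A_\ell|}=\un_{|A_\ell|}$, so $\un_T\in\colsp(\Pi)$. Therefore $\colsp(\un_T)\subseteq\colsp(S)\cap\colsp(\Pi)$ unconditionally, and only the reverse inclusion carries content. Next, because $\Pi$ is block diagonal with blocks $J_{|A_\ell|}$, an element of $\colsp(\Pi)$ is precisely a vector constant on each block, of the form $(w_1\un_{|A_1|},\ldots,w_L\un_{|A_L|})$ for scalars $w_\ell$; meanwhile the block-$\ell$ restriction of $Sx$ equals $(x_n)_{n\in A_\ell}$. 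Matching the two, such a vector lies in the intersection if and only if there is $x\in\RR^N$ with $x_n=w_\ell$ for every $\ell$ and every $n\in A_\ell$. I would then read this consistency condition through the graph $\mathcal G$: whenever $A_\ell\cap A_m\neq\emptyset$ the shared agent forces $w_\ell=w_m$, so $w$ must be constant on each connected component of $\mathcal G$, and conversely any component-constant $w$ yields a well-defined $x$. This identifies $\colsp(S)\cap\colsp(\Pi)$ with the space of component-constant $w$'s, whose dimension is the number of connected components of $\mathcal G$; the intersection thus collapses to the line $\colsp(\un_T)$ exactly when there is a single component, i.e. when $\mathcal G$ is connected.

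The main obstacle is the reverse inclusion in \emph{ii)}: verifying that connectivity genuinely reduces the intersection to a line, which amounts to the well-definedness of the reconstructed $x$ along chains of overlapping blocks. I expect to dispatch this exactly as in the proof of Lemma~\ref{lm-equiv}, propagating the equality $w_{\ell}=w_{m}$ along an overlap chain guaranteed by connectivity (agents outside $\bigcup_\ell A_\ell$ may be assigned arbitrarily, as they do not affect $Sx$). Everything else is bookkeeping. I would finally note that this second equivalence never invokes $\bigcup_\ell A_\ell=\mathcal A$, so the two coordinates combine directly to give the stated equivalence.
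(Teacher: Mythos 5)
Your proposal is correct and follows essentially the same route as the paper: both split the lemma into the two coordinate equivalences, settle the first by noting that $\rank(S)=N$ fails exactly when some agent is covered by no $A_\ell$, and settle the second by characterizing $\colsp(S)\cap\colsp(\Pi)$ as the vectors $Sv$ with $v$ constant on each $A_\ell$, propagating equality along overlapping blocks via the connectivity of $\mathcal{G}$ exactly as in Lemma~\ref{lm-equiv}. The only cosmetic difference is that you obtain the converse direction of item \emph{ii)} from a dimension count over the connected components of $\mathcal{G}$, whereas the paper exhibits an explicit two-valued vector $v$ when $\mathcal{G}$ is disconnected; both are sound.
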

\begin{proof}
Since the matrix $S$ has one non zero element per row, its non zero columns 
are linearly independent. Assumption \ref{ass-equiv}-\emph{i)} is equivalent 
to the fact that no column of 
$S$ is zero. Therefore, Assumptions \ref{ass-equiv}-\emph{i)} and 
Item $i)$ in the statement of the lemma are equivalent.  

Denoting any vector $v \in \RR^N$ as $v = (v_1,\ldots, v_N)$, we have 
\begin{multline*} 
y \in \colsp(S) \cap \colsp(\Pi) 
\Leftrightarrow  \\  
\exists v \in \RR^N \, : \,   
y = ( (v_n)_{n\in A_1}, \ldots, (v_n)_{n\in A_L} )  \\ 
 \text{and} \ 
\sum_{\ell=1}^L \imath_{\colsp(\un_{|A_\ell|})}((v_n)_{n\in A_\ell}) = 0 . 
\end{multline*} 
If Assumption \ref{ass-equiv}-\emph{ii)} is satisfied, then the proof
of Lemma \ref{lm-equiv} shows that 
$\sum_{\ell} \imath_{\colsp(\un_{|A_\ell|})}((v_n)_{n\in A_\ell}) = 
\imath_{\colsp(\un_{|\cup_\ell A_\ell|})}( (v_n)_{n\in \cup_\ell A_\ell})$, 
which shows that $y \in \colsp(\un_T)$. Conversely, suppose that 
Assumption \ref{ass-equiv}-\emph{ii)} is not satisfied. 
Then there exists a non empty set $\cal C$ strictly included in 
$\{1,\ldots, L \}$ such that $\cup_{\ell \in {\cal C}} A_\ell$ and 
$\cup_{\ell \in \{1,\ldots, L\} - {\cal C}} A_\ell$ are disjoint. 
Let $v \in \RR^N$ be defined as 
$( v_n )_{n \in \cup_{\ell \in {\cal C}} A_\ell} = 
\alpha \un_{| \cup_{\ell \in {\cal C}} A_\ell|}$ and 
$( v_n )_{n \in \cup_{\ell \in \{1,\ldots, L\} - {\cal C}} A_\ell} = 
\beta \un_{| \cup_{\ell \in \{1,\ldots, L\} - {\cal C}} A_\ell|}$ 
with $\alpha \neq \beta$. Then it is easy to see that 
$y = ( (v_n)_{n\in A_1}, \ldots, (v_n)_{n\in A_L} ) \in 
\colsp(S) \cap \colsp(\Pi)$ but $y \not\in \colsp(\un_T)$. In conclusion, 
Assumption \ref{ass-equiv}-\emph{ii)} and Item $ii)$ are
equivalent. 
\end{proof} 

Other properties of $S$ will be needed: 
\begin{lemma}
\label{lm-S} 
The matrix $S$ satisfies the following properties:  
\emph{(i)} $S \un_N = \un_T$, \emph{(ii)} $(I_T - \Pi) S \un_N = 0$.
\end{lemma}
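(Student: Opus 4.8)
The plan is to verify both identities directly from the combinatorial structure of $S$ and the block structure of $\Pi$; no deep argument is needed, so the whole lemma reduces to two short computations.

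For part (i), I would recall that $S$ is obtained by stacking the selection matrices $S_{A_1},\dots,S_{A_L}$, and that by construction each $S_{A_\ell}$ has entries in $\{0,1\}$ with exactly one non-zero entry per row. Hence every one of the $T$ rows of $S$ contains a single $1$ and zeros elsewhere. Multiplying such a row by $\un_N$ simply sums its entries, which equals $1$. Collecting the $T$ rows then gives $S\un_N = \un_T$.

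For part (ii), the idea is to combine (i) with the fact that the consensus projector $\Pi$ leaves $\un_T$ invariant. First, by (i) it suffices to show $(I_T-\Pi)\un_T=0$, i.e. $\Pi\un_T=\un_T$. Now $\Pi$ is block diagonal with diagonal blocks $J_{|A_\ell|} = |A_\ell|^{-1}\un_{|A_\ell|}\un_{|A_\ell|}^*$, and the partition of $\un_T$ induced by these blocks is exactly $(\un_{|A_1|},\dots,\un_{|A_L|})$. On the $\ell$-th block I compute $J_{|A_\ell|}\un_{|A_\ell|} = |A_\ell|^{-1}\un_{|A_\ell|}(\un_{|A_\ell|}^*\un_{|A_\ell|}) = \un_{|A_\ell|}$, since $\un_{|A_\ell|}^*\un_{|A_\ell|} = |A_\ell|$. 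Stacking these blocks yields $\Pi\un_T=\un_T$, whence $(I_T-\Pi)S\un_N = (I_T-\Pi)\un_T = 0$.

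The only thing to keep track of is the bookkeeping between the global indexing (by $\un_N$ and $\un_T$) and the per-block indexing (by the $\un_{|A_\ell|}$); once the partition of $\un_T$ into the all-ones vectors of each block is written down, the computation is immediate. There is no substantial obstacle here: the lemma simply records that $S$ maps the global consensus direction $\un_N$ to the stacked consensus direction $\un_T$, which already lies in the range of $\Pi$, and these two facts are precisely what the eigenstructure analysis of $R$ will later need.
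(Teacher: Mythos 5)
Your proposal is correct and follows exactly the paper's argument: part (i) from the fact that each row of $S$ has a single non-zero entry equal to $1$, and part (ii) from $\Pi \un_T = \un_T$ applied to (i). The only difference is that you spell out the block computation $J_{|A_\ell|}\un_{|A_\ell|} = \un_{|A_\ell|}$ verifying $\Pi\un_T=\un_T$, which the paper simply asserts.
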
 
\begin{proof}
The property \emph{(i)} is due to the fact that $S$ contains one non zero 
element per row, and this element is equal to $1$. Since 
$\Pi \un_T = \un_T$, we obtain \emph{(ii)}. 
\end{proof} 

\subsection{The proximity operator}

% The proximity operator of a function $h \in \Gamma_0(\mathbb R^n)$ is the 
% mapping defined on $\RR^n\to\RR^n$ by
% \[
% \text{prox}_{h}(x) = \argmin_{w} \Bigl( h(w) + 
% \frac{1}{2} \| w - x \|^2 \Bigr) \,.
% \]
{Let $h \in \Gamma_0(\mathbb R^n)$.}
The following two lemmas are well known, see 
\emph{e.g.}~\cite{livre-combettes}: 
\begin{lemma}
\label{lipsh} 
The operator $\prox_h$ is non expansive, \emph{i.e.}, 
$\| \prox_h(x) - \prox_h(x') \| \leq \| x - x' \|$.
\end{lemma}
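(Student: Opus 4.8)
The plan is to prove the stronger property that $\prox_h$ is \emph{firmly} non-expansive, from which the stated non-expansiveness follows immediately by Cauchy--Schwarz. The whole argument rests on the subdifferential characterization of the proximity operator together with the monotonicity of the subdifferential of a convex function.

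First I would characterize $\prox_h$ through a first-order optimality condition. Fixing $x \in \RR^n$ and setting $p = \prox_h(x)$, the point $p$ is the unique minimizer of the strongly convex map $w \mapsto h(w) + \frac{1}{2}\|w-x\|^2$ (existence and uniqueness are guaranteed by strong convexity and the fact that $h \in \Gamma_0(\RR^n)$). Writing the optimality condition $0 \in \partial h(p) + (p - x)$ yields $x - p \in \partial h(p)$. Applying the same reasoning to $x' \in \RR^n$ with $p' = \prox_h(x')$ gives $x' - p' \in \partial h(p')$.

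Next I would invoke the monotonicity of $\partial h$: since $x - p \in \partial h(p)$ and $x' - p' \in \partial h(p')$, we have $\langle (x-p) - (x'-p'),\, p - p' \rangle \geq 0$. Rearranging this inequality produces the firm non-expansiveness relation $\langle x - x',\, p - p' \rangle \geq \|p - p'\|^2$. Finally, bounding the left-hand side by Cauchy--Schwarz gives $\|p-p'\|^2 \leq \|x-x'\|\,\|p-p'\|$; dividing by $\|p-p'\|$ (the case $p=p'$ being trivial) yields $\|p-p'\| \leq \|x-x'\|$, which is the claim.

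I expect no genuine obstacle here, as this is a classical fact; the only points deserving care are the well-definedness and uniqueness of the minimizer defining $\prox_h$ (secured by strong convexity of the penalized objective for $h \in \Gamma_0(\RR^n)$) and the correct application of the subdifferential sum rule, both of which are standard in convex analysis and can be found in the cited reference.
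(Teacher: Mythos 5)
Your proof is correct and complete: the paper itself offers no proof of this lemma, simply citing it as well known from \cite{livre-combettes}, and your argument (optimality condition $x - p \in \partial h(p)$, monotonicity of $\partial h$ giving firm non-expansiveness, then Cauchy--Schwarz) is exactly the classical derivation found in that reference. The two points you flag as needing care --- strong convexity of the penalized objective for existence/uniqueness, and the subdifferential sum rule, which applies here because the quadratic term is finite-valued and differentiable everywhere --- are indeed handled correctly.
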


\begin{lemma}
\label{proj} 
Let ${\cal C} \subset \RR^n$ be a closed convex set. Then 
$\prox_{\imath_{\cal C}}(x)$ coincides with the orthogonal projection of $x$ 
on $\cal C$.
\end{lemma}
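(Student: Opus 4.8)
The plan is to prove the identity by simply unfolding the definition of the proximity operator and recognizing the resulting optimization problem as the projection problem. First I would substitute $h = \imath_{\cal C}$ into the definition recalled at the beginning of Section~\ref{formul}, giving
\[
\prox_{\imath_{\cal C}}(x) = \argmin_{w} \Bigl( \imath_{\cal C}(w) + \frac{1}{2} \| w - x \|^2 \Bigr) .
\]
Since $\imath_{\cal C}(w) = +\infty$ for every $w \notin {\cal C}$ while any point of ${\cal C}$ yields a finite objective value, no point outside ${\cal C}$ can be a minimizer. The minimization therefore reduces to $\argmin_{w \in {\cal C}} \tfrac{1}{2}\| w - x \|^2$, which is the same as $\argmin_{w \in {\cal C}} \| w - x \|$. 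This is exactly the definition of the orthogonal projection of $x$ onto ${\cal C}$, so the two objects coincide.

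Next I would justify that the $\argmin$ is a well-defined single point, so that $\prox_{\imath_{\cal C}}$ is genuinely a mapping. Because ${\cal C}$ is closed and convex and nonempty, the indicator $\imath_{\cal C}$ is proper, lower semicontinuous and convex, i.e. $\imath_{\cal C} \in \Gamma_0(\RR^n)$; adding the strictly convex, coercive quadratic term makes the objective proper, lower semicontinuous, strictly convex and coercive. Coercivity together with lower semicontinuity guarantees existence of a minimizer, and strict convexity guarantees its uniqueness. This is precisely the classical projection theorem onto closed convex subsets of the Hilbert space $\RR^n$.

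There is no substantive obstacle here: the statement is a direct consequence of the definition of $\prox$, and the only external ingredient is the standard existence-and-uniqueness result for the metric projection on a closed convex set, which is why the paper flags this lemma as well known and refers the reader to~\cite{livre-combettes} rather than giving a detailed argument.
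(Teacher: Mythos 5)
Your proposal is correct: unfolding the definition of $\prox_{\imath_{\cal C}}$ and invoking the classical projection theorem (existence from closedness and coercivity, uniqueness from strict convexity) is exactly the standard argument, and you rightly note the implicit nonemptiness of ${\cal C}$ needed for $\imath_{\cal C}$ to be proper. The paper itself gives no proof of this lemma --- it simply cites it as well known from~\cite{livre-combettes} --- so your write-up faithfully supplies the argument the reference contains, with no gap.
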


\subsection{The quadratic case} 

We consider herein the case where the functions $f_n$ are quadratic: 
$f_n(x(n)) = 0.5 \, x(n)^* \Phi_n  x(n) + c_n^* x(n) + d_n$ where the 
$K\times K$ matrices $\Phi_n$ are symmetric and nonnegative, the $c_n$ are 
some given $K\times 1$ vectors, and the $d_n$ are some given scalars. 
Assumption 
\ref{nabla2} reads in this case $\sum_1^N \Phi_n > 0$. We immediately
observe that the solution of Problem \eqref{problem} is attained at a unique
point $x_\star$. Observe also that the function $f(x)$ is quadratic with
the gradient $\nabla f(x) = \Phi x + c$ where
\begin{equation}
\label{def-Sigma} 
\Phi = \begin{bmatrix} \Phi_1 \\ & \ddots \\ & & \Phi_N \end{bmatrix}, 
\quad \text{and} \quad  
c = \begin{bmatrix} c_1 \\ \vdots \\ c_N \end{bmatrix} . 
\end{equation}  
Our first task is formulate an ADMM iteration as a single line affine 
transformation, as alluded to in the introduction of this paper. 
We start with the $z$ -- update equation~\eqref{zk}. Writing 
$\zeta_{k+1} = \rho M x_{k+1} + \lambda_{k}$, Equation~\eqref{zk} can be 
rewritten $z_{k+1} = \prox_{\rho^{-1} g}(\zeta_{k+1} / \rho)$.
Recall now that the matrices $P_{|A_\ell|}$ introduced in 
Section~\ref{formul} are the orthogonal projection matrices 
on the subspaces $C_{|A_\ell|}\subset \RR^{|A_\ell|K}$. 
Therefore, the matrix $P$ is the orthogonal projection matrix on the 
subspace of $\RR^{TK}$ that coincides with the domain of $g$. By 
Lemma~\ref{proj}, we get that $z_{k+1} = \rho^{-1} P \zeta_{k+1}$. 

Letting $P_\perp = I_{TK} - P$ be the orthogonal projection matrix on the 
orthogonal complement of $\colsp(P)$, we get that 
$P_\perp\zeta_{k+1} = \zeta_{k+1} - P \zeta_{k+1} = 
\zeta_{k+1} - \rho z_{k+1} = \lambda_k + \rho( Mx_{k+1} - z_{k+1}) = 
\lambda_{k+1}$. Summarizing, we have 
$\zeta_k = \lambda_k + \rho z_k$, $\rho z_k = P \zeta_k$ and 
$\lambda_k = P_\perp \zeta_k$ for any $k\in\NN$. 

We now turn to the $x$ -- update equation. Since $f$ is differentiable, 
Equation~\eqref{xk} can be rewritten as 
\[
\nabla f(x_{k+1}) + \rho M^* (M x_{k+1} + \lambda_k/\rho - z_k ) = 0 , 
\]
or equivalently, 
\[
\nabla f(x_{k+1}) + \rho M^* M x_{k+1} + M^* (I-2P) \zeta_k = 0 . 
\]
As $\nabla f(x) = \Phi x + c$, we get 
$(\Phi + \rho M^* M ) x_{k+1} = - M^* (I-2P) \zeta_k - c$. 
Since $M^* M = (S^* S) \otimes I_K > 0$ by Lemma \ref{lm-span}, the matrix 
$H = \Phi + \rho M^* M$ is invertible, and we end up with 
\begin{equation}
\label{x-quad} 
x_{k+1} = - H^{-1} M^* (I - 2P) \zeta_k - H^{-1} c . 
\end{equation} 
Recalling that $\zeta_{k+1} = \rho M x_{k+1} + \lambda_k$ and observing that
the matrix $Q$ defined in~\eqref{def-Q} coincides with $\rho M H^{-1} M^*$ in 
the quadratic case, we finally obtain 
\begin{align}
\nonumber \zeta_{k+1} &= - Q(I-2P) \zeta_k + \lambda_k - \rho M H^{-1} c \\
\nonumber             &= (P_\perp - Q)(I-2P) \zeta_k - \rho M H^{-1} c \\ 
            &\eqdef R \zeta_k + d  \label{eq:rec-R}
\end{align} 
where $R=(P_\perp - Q)(I-2P)$ and $d = - \rho M H^{-1} c$. 

\begin{remark}
\label{DR} 
These derivations show that the sequence $\zeta_k$ is autonomous and 
completely characterizes ADMM. This phenomenon is in fact general and
shows up naturally when ADMM is interpreted as a particular case of 
the Douglas-Rachford splitting algorithm 
\cite{gabay1976dual,eckstein1992douglas}. 
\end{remark}

The following lemma provides some important spectral properties of $R$. In 
fact, part of the results shown in its statement can be deduced from 
Theorem~\ref{1st-ord}. Indeed, a consequence of this theorem is that 
the iterations $\zeta_{k+1} = R \zeta_k + d$ converge for any initial value 
$\zeta_0$. Yet, the direct proof provided below provides a finer understanding 
of the spectral properties of $R$: 

\begin{lemma}
\label{lm-R} 
$\sprad(R)\leq 1$. Moreover, for any $\theta \in (0,2\pi)$, 
$\exp(\imath\theta)$ is not an eigenvalue of $R$. Finally, the algebraic
and geometric multiplicities of any eigenvalue of $R$ coincide. 
\end{lemma}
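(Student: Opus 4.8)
The plan is to lean on two structural facts throughout: $U := I-2P$ is a symmetric orthogonal involution ($U^*=U$, $U^2=I$, $\|U\|=1$), and $A := P_\perp - Q = I-P-Q$ is symmetric, so that $R = AU$.

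First I settle the bound behind $\sprad(R)\le1$. The matrix $Q=\rho M H^{-1}M^*$ is symmetric positive semidefinite, and its nonzero eigenvalues coincide with those of $\rho M^*M H^{-1}=\rho W H^{-1}$, where $W:=M^*M>0$ by Lemma~\ref{lm-span}(i). Conjugating by $W^{1/2}$ and setting $B:=W^{-1/2}\Phi W^{-1/2}\succeq0$, one has $H=W^{1/2}(B+\rho I)W^{1/2}$, so the nonzero eigenvalues of $Q$ equal those of $\rho(B+\rho I)^{-1}\in(0,1]$; hence $0\preceq Q\preceq I$. Since also $0\preceq P\preceq I$, we get $0\preceq P+Q\preceq 2I$, i.e. $-I\preceq A\preceq I$ and $\|A\|\le1$. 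Therefore $\|R\|\le\|A\|\,\|U\|\le1$, which gives $\sprad(R)\le\|R\|\le1$ and, incidentally, that $R$ is power bounded ($\|R^k\|\le1$ for all $k$).

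Next, for the unit-circle claim, suppose $Rv=\mu v$ with $v\neq0$ and $|\mu|=1$, and put $w=Uv$ (so $v=Uw$). Then $\|Aw\|=\|AUv\|=\|Rv\|=|\mu|\,\|v\|=\|w\|$, and since $A$ is symmetric with $\|A\|\le1$ this forces $A^2w=w$, i.e. $w\in\ker(I-A)\oplus\ker(I+A)$. Now $\ker(I-A)=\ker(P+Q)=\ker P\cap\ker Q$ (as $P,Q\succeq0$), on which $A=I$ and $U=I$; while $\ker(I+A)=\ker\!\big((I-P)+(I-Q)\big)=\colsp(P)\cap\ker(I-Q)$, on which $A=-I$ and $U=-I$. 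Decomposing $w=w_+ + w_-$ along these subspaces gives $Rv=Aw=w_+-w_-$ and $v=Uw=w_+-w_-$, hence $Rv=v$ and $\mu=1$. This excludes every $e^{\imath\theta}$ with $\theta\in(0,2\pi)$ and simultaneously identifies the $\mu=1$ eigenspace.

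Finally, for the semisimplicity claim I would argue as follows. Because $U$ is a symmetric involution, $R=AU$ is self-adjoint for the indefinite bilinear form $[x,y]:=x^*Uy$: indeed $R^*=UA$, so $[Rx,y]=x^*UAUy=[x,Ry]$. Diagonalizability of a $U$-self-adjoint operator is equivalent to non-degeneracy of $[\cdot,\cdot]$ on each of its root subspaces; concretely, a Jordan block can occur only at an eigenvalue admitting an isotropic eigenvector ($v^*Uv=0$). At the unique unimodular eigenvalue $\mu=1$, semisimplicity is automatic from power-boundedness ($\|R^k\|\le1$). It remains to rule out Jordan chains at eigenvalues with $|\mu|<1$, and this is the crux: it genuinely requires the structure $P=\Pi\otimes I_K$, $M=S\otimes I_K$ encoded in Lemmas~\ref{lm-span}--\ref{lm-S}, since a general $U$-self-adjoint contraction may have nontrivial Jordan blocks (small explicit examples with an isotropic eigenvector at an eigenvalue of modulus $<1$ exist). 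The plan here is to use $\rank(S)=N$ together with $\colsp(S)\cap\colsp(\Pi)=\colsp(\un_T)$ to show that every eigenvector attached to an eigenvalue with $|\mu|<1$ is non-isotropic for $[\cdot,\cdot]$, which precludes a Jordan chain and yields the coincidence of algebraic and geometric multiplicities for all eigenvalues. I expect this last step to be the main obstacle, as the earlier parts reduce to clean norm and eigenspace computations whereas here the graph/consensus structure must be used in an essential way.
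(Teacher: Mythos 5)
Your handling of the first two claims is correct and is essentially the paper's own argument. For $\sprad(R)\le 1$, the paper likewise reduces to $\|P_\perp - Q\|\le 1$; your explicit verification that $0\preceq Q\preceq I$ (via the eigenvalues of $\rho(B+\rho I)^{-1}$ with $B=W^{-1/2}\Phi W^{-1/2}$) actually fills in a step the paper only asserts implicitly when it writes $|x^*P_\perp x - x^*Qx|\le\|x\|^2$. For the exclusion of $e^{\imath\theta}$, $\theta\in(0,2\pi)$, your decomposition of $w=Uv$ into the $\pm 1$ eigenspaces of $P_\perp-Q$ is exactly the paper's: it writes $w_{\mathrm R}=u+v$ with $(P_\perp-Q)u=u$, $(P_\perp-Q)v=-v$, derives $Qu=Pu=0$ and $P_\perp v=0$ (your $\ker(I-A)=\ker P\cap\ker Q$ and $\ker(I+A)=\colsp(P)\cap\ker(I-Q)$), and concludes $\theta=0$.

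The genuine gap is the third claim, and it is not merely that you defer it: the route you sketch cannot work as stated. You propose to show that every eigenvector attached to an eigenvalue $\mu$ with $|\mu|<1$ is non-isotropic for $[x,y]=x^*Uy$. But $U$-self-adjointness of $R$ forces $\bar\mu\,[v,v]=[Rv,v]=[v,Rv]=\mu\,[v,v]$ for any eigenpair $(\mu,v)$, so every eigenvector at a \emph{non-real} eigenvalue is automatically isotropic; no amount of structure from Lemmas~\ref{lm-span}--\ref{lm-S} can make it otherwise. And non-real eigenvalues of modulus strictly less than one genuinely occur in this model: in the ring-network computation (Appendix, Corollary~\ref{coro:ring}), the roots of $\lambda^2-\lambda s_k+D_k=0$ are complex conjugates in the intermediate range of $\rho$. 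So for complex eigenvalues your criterion is vacuous and a completely different mechanism is needed; the indefinite-metric theory constrains Jordan structure only at real eigenvalues (via non-degeneracy on root subspaces), not at conjugate pairs. Your correct observation that $\|R^k\|\le 1$ settles semisimplicity at $\mu=1$ does not help inside the disk, where contractions admit arbitrary Jordan structure.

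For comparison, the paper disposes of the third claim by a short rank computation: $\mu$ is semisimple iff $\rank((R-\mu I)^2)=\rank(R-\mu I)$, and using $(I-2P)^2=I$ it reduces both sides to ranks of powers of the single symmetric matrix $P_\perp-Q-\mu(P_\perp-P)$, for which rank stability under squaring is claimed. Your instinct that ``soft'' structure is insufficient is in fact well-founded --- a product of a symmetric contraction with a symmetric involution can carry nontrivial Jordan blocks, and the paper's reduction leans on the identity $\rank(AB)=\rank(BA)$, which is not valid for arbitrary factors and deserves scrutiny precisely at eigenvalues with $U$-isotropic eigenvectors, i.e.\ at the same place your plan stalls. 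But whatever one thinks of the paper's justification, your proposal as submitted proves only the first two assertions of the lemma; the third is announced, not established, and the announced strategy fails for complex spectrum.
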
 

\begin{proof}
We have $\sprad(R)\leq \| R \| = \| P_\perp - Q \|$.
\modif{Upon noting that $|x^* P_\perp x - x^* Q x| \leq \| x \|^2$ for any $x \in \RR^{TK}$,
all eigenvalues of the real symmetric matrix $P_\perp - Q$ have their absolute value no larger than one.
Thus, the same holds for $(P_\perp - Q)^2$ from which we obtain $\| P_\perp - Q \|\leq 1$. This proves the first point of the Lemma.}

Assume that $\exp(\imath\theta)$ is an eigenvalue of $R$ for some
$\theta \in [0,2\pi)$, and let $w$ be an associated eigenvector with
$\| w \| = 1$. Since $Rw = \exp(\imath\theta) w$, we have $\| Rw \| =
1$, which implies that $w_{\text{R}}^* (P_\perp - Q)^2 w_{\text{R}} =
1 = \| w_{\text{R}} \|^2$ where $w_{\text{R}} = (I - 2P)w$.
\modif{This implies that $w_{\text{R}}$ lies in the eigenspace of
  $(P_\perp-Q)^2$ corresponding to the unit eigenvalue.  Any such
  vector can be written as the sum $w_{\text{R}}=u+v$ of two
  orthogonal vectors satisfying $(P_\perp-Q)u=u$ and
  $(P_\perp-Q)v=-v$\,. As $u^*(P_\perp-Q)u=\|u\|^2$, one has
  $\|u\|^2\geq u^*P_\perp u = \|u\|^2 + u^*Qu$. Since $Q$ is non
  negative, we obtain $Qu=0$ and, consequently, $Pu=0$.  Using similar
  arguments,  $P_\perp v=0$. Now, equation $Rw = \exp(\imath\theta) w$ reads $(P_\perp
  - Q)(u+v) = \exp(\imath\theta) (I-2P)(u+v)$. In the light of the above properties of $u$ and $v$,
this is equivalent to
  $u-v=\exp(\imath\theta)(u-v)$. Thus $\theta=0$ and the second point is proved.}  

% This shows that
% $(P_\perp - Q) w_{\text{R}} = w_{\text{R}}$ or $-w_{\text{R}}$
%  Using $Rw = \exp(\imath\theta) w$ again, we have
% $w_{\text{R}}^* R w = w_{\text{R}}^* (P_\perp - Q) w_{\text{R}} =
% \exp(\imath\theta) w_{\text{R}}^* w = \exp(\imath\theta)
% w_{\text{R}}^* (I - 2P)w_{\text{R}}$. If $w_{\text{R}}^* (P_\perp - Q)
% w_{\text{R}} = 1$, then $w_{\text{R}} = P_\perp w_{\text{R}}$.  If
% $w_{\text{R}}^* (P_\perp - Q) w_{\text{R}} = - 1$, then $w_{\text{R}}
% = P w_{\text{R}}$. In both cases, we obtain that
% $\exp(\imath\theta) = 1$, a contradiction.  \\
The eigenvalue $\lambda$ has the same algebraic and geometric
multiplicities if and only if $\rank((R - \lambda I)^2) = \rank(R -
\lambda I)$ (indeed, any Jordan block ${\cal J}_\lambda$ with size $>
1$ associated with $\lambda$ would satisfy $\rank(({\cal J}_\lambda -
\lambda I)^2)= \rank({\cal J}_\lambda -\lambda I) - 1$).  Using the
identities $\rank(AB) = \rank(BA)$ and $\rank(AB) = \rank(B)$ if $A$
is invertible, we obtain
\[
\rank(R - \lambda I) = \rank( P_\perp - Q  - \lambda (P_\perp - P) ) 
\]
since $(P_\perp - P)^2 = I$, and  
\begin{align*} 
\rank((R - \lambda I)^2) &= 
\rank( [( P_\perp - Q  - \lambda (P_\perp - P)) (P_\perp - P)]^2 )  \\
&= 
\rank( [ P_\perp - Q  - \lambda (P_\perp - P) ]^2 )  . 
\end{align*} 
Since $P_\perp - Q  - \lambda (P_\perp - P)$ is symmetric, these two ranks 
coincide. 
\end{proof} 

When it exists, the eigenspace of $R$ associated with the eigenvalue $1$ 
plays an important role. In order to characterize this subspace, we start
with a preliminary result: 
\begin{lemma}
\label{lm-N}
Let Assumptions \ref{nabla2} and \ref{ass-equiv} hold true, and let 
${\cal N} = \ker( Q - P)$ be the null space of $Q - P$. Then 
\[
{\cal N} = \ker(Q) \cap \ker(P) = \ker(Q+P) .   
% \ker(P + MM^*) = \ker(\Pi + SS^*) \otimes \RR^K  
\]
% where $\otimes$ denotes here the tensor product. 
\end{lemma}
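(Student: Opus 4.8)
The plan is to prove the two nontrivial inclusions, since $\ker(Q)\cap\ker(P)\subseteq\ker(Q-P)$ and $\ker(Q)\cap\ker(P)\subseteq\ker(Q+P)$ are immediate. First I would record three facts about $Q$ and $P$. Both are real symmetric and positive semidefinite: $P$ because it is an orthogonal projection, and $Q=\rho M H^{-1}M^*$ with $H=\nabla^2 f(\un_N\otimes x_\star)+\rho M^*M\succ 0$, so that $x^*Qx=\rho\,(M^*x)^*H^{-1}(M^*x)\geq 0$. Moreover, since $H\succeq\rho M^*M$ and $M^*M$ is invertible (Lemma~\ref{lm-span}), one has $H^{-1}\preceq(\rho M^*M)^{-1}$ and hence $Q\preceq \rho M(\rho M^*M)^{-1}M^*=\Pi_M\preceq I$, where $\Pi_M$ denotes the orthogonal projection onto $\colsp(M)$. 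The identity $\ker(Q+P)=\ker(Q)\cap\ker(P)$ then follows at once: for $x$ with $(Q+P)x=0$ we get $x^*Qx+x^*Px=0$, and as both terms are nonnegative they vanish, forcing $Qx=0$ and $Px=0$ by positive semidefiniteness.

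The heart of the argument is the inclusion $\ker(Q-P)\subseteq\ker(Q)\cap\ker(P)$. Fix $x$ with $Qx=Px$ and set $y\eqdef Qx=Px$. Since $\colsp(Q)\subseteq\colsp(M)$ while $y=Px\in\colsp(P)$, the vector $y$ lies in $\colsp(M)\cap\colsp(P)$. Tensoring the characterization of Lemma~\ref{lm-span}\emph{(ii)} with $I_K$ gives $\colsp(M)\cap\colsp(P)=\colsp(\un_T\otimes I_K)$, so $y=\un_T\otimes a$ for some $a\in\RR^K$. Next I would show that $y$ is a fixed point of $Q$. Because $P$ is an orthogonal projection, $x^*Px=\|Px\|^2=\|y\|^2$ and $x^*y=x^*Px=\|y\|^2$; likewise $x^*Qx=x^*y=\|y\|^2$ and $x^*Q^2x=\|Qx\|^2=\|y\|^2$. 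Hence $x^*Q(I-Q)x=0$, and since $Q(I-Q)\succeq 0$ (using $Q\preceq I$ and that $Q$ commutes with $I-Q$), this yields $Qx=Q^2x$, i.e. $Qy=y$.

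It remains to see that the only vector of the form $\un_T\otimes a$ fixed by $Q$ is $0$, which is where Assumption~\ref{nabla2} enters. Writing $y=Mw$ (with $w$ unique since $M$ has full column rank), the equation $Qy=y$ reads $\rho M H^{-1}M^*Mw=Mw$; cancelling the injective $M$ and substituting $H=\Phi+\rho M^*M$ with $\Phi\eqdef\nabla^2 f(\un_N\otimes x_\star)$ leaves $\Phi w=0$. On the other hand $Mw=\un_T\otimes a$ forces $w(n)=a$ for every agent $n$ appearing in some cluster, hence $w=\un_N\otimes a$ by Assumption~\ref{ass-equiv}\emph{(i)}. Then $\Phi w=0$ means $\nabla^2 f_n(x_\star)\,a=0$ for all $n$, and summing gives $\bigl(\sum_n\nabla^2 f_n(x_\star)\bigr)a=0$; the strict positivity in Assumption~\ref{nabla2} forces $a=0$, so $y=0$. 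Thus $Qx=Px=0$, which establishes the inclusion and, together with the first paragraph, the whole lemma.

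The routine parts are the positive-semidefiniteness bookkeeping and the $\ker(Q+P)$ identity. The step I expect to require the most care is pinning down $y\in\colsp(M)\cap\colsp(P)$ and then deducing $Qy=y$, since this is where the projection structure of $P$ and the bound $Q\preceq I$ must be combined correctly; the final contradiction via $\sum_n\nabla^2 f_n(x_\star)>0$ is short but is the only place Assumption~\ref{nabla2} is genuinely used.
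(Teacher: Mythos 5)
Your proof is correct, and while it shares the paper's skeleton, its central mechanism is genuinely different. Both arguments begin identically: from $Q\zeta = P\zeta \eqdef y$ one places $y \in \colsp(Q)\cap\colsp(P) \subseteq \colsp(M)\cap\colsp(P) = \colsp(\un_T)\otimes\RR^K$ via Lemma~\ref{lm-span}, and both end identically, with $\Phi(\un_N\otimes a)=0$ forcing $a=0$ by Assumption~\ref{nabla2}. The middle differs. The paper keeps the component $\lambda = P_\perp\zeta$ in play: expanding $Q\zeta=\rho z$ and cancelling the injective $M$ gives $M^*\lambda = \Phi(\un_N\otimes q)$, and $\lambda$ is then eliminated by left-multiplying with $\un_N^*\otimes I_K$ and invoking $(I_T-\Pi)S\un_N=0$ (Lemma~\ref{lm-S}\emph{(ii)}). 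You avoid $\lambda$ entirely by first proving the fixed-point identity $Qy=y$ through an equality-case argument on quadratic forms: since $P$ is an orthogonal projection, $x^*Qx = x^*Q^2x = \|y\|^2$, and since $0\preceq Q\preceq I$ with $Q(I-Q)=Q^{1/2}(I-Q)Q^{1/2}\succeq 0$, the vanishing of $x^*Q(I-Q)x$ forces $Q^2x=Qx$. Your justification of $Q\preceq I$ (from $H\succeq \rho M^*M$, operator monotonicity of inversion, and $Q\preceq M(M^*M)^{-1}M^*\preceq I$) is sound, and is in fact the same bound the paper uses implicitly when it asserts $\|P_\perp - Q\|\leq 1$ in the proof of Lemma~\ref{lm-R}. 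The resulting equation $\rho MH^{-1}M^*Mw = Mw$ is $\lambda$-free and collapses directly to $\Phi w = 0$. What each approach buys: yours is more modular --- the step ``a common value $y=Qx=Px$ of a projection $P$ and a contraction $0\preceq Q\preceq I$ is a common fixed vector'' is a general fact about such pairs, and you never need Lemma~\ref{lm-S}\emph{(ii)} (only the covering property of Assumption~\ref{ass-equiv}\emph{(i)} to identify $w=\un_N\otimes a$); the paper's elimination is more hands-on linear algebra that exploits the specific decomposition $\zeta = \lambda + \rho z$ and the combinatorial identity $(I_T-\Pi)S\un_N=0$, at the price of carrying the auxiliary unknown $\lambda$ through the computation.
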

\begin{proof}
\modif{We provide the proof of the first equality ${\cal N} = \ker(Q) \cap \ker(P)$
(the second equality follows from the non negativity of $Q$ and $P$). Note 
that we only need to prove that
${\cal N} \subset \ker(Q) \cap \ker(P)$, the other inclusion being trivial.
For any $\zeta\in {\cal N}$, we set $\zeta = \lambda + \rho z$ where  $\lambda = P_\perp \zeta$
and $\rho z= P\zeta$. Vector $\zeta$ satisfies $Q\zeta = P \zeta = \rho z$.
}
%We need to identify the vectors $\zeta = \lambda + \rho z$ satisfying 
%$Q\zeta = P \zeta = \rho z$ (as usual, we set $\lambda = P_\perp \zeta$).  
 Observe that $\colsp(Q) = \colsp(M) = \colsp( S \otimes I_K) = 
\colsp(S) \otimes \RR^K$ 
where the second $\otimes$ denotes the tensor product. 
And since $\colsp(P) = \colsp( \Pi \otimes I_K) = \colsp(\Pi) \otimes \RR^K$, 
we obtain that  
$z \in \colsp(M) \cap \colsp(P) = \left(\colsp(S) \cap \colsp(\Pi)\right)
\otimes \RR^K = \colsp(\un_T) \otimes \RR^K$ by Lemma \ref{lm-span}. 
Hence $z = \un_T \otimes q$ where $q \in \RR^K$ needs to be determined. 
Replacing in the equation $Q\zeta = \rho z$, we obtain 
$M H^{-1} \left( M^* \lambda + \rho M^* (\un_T \otimes q) \right) = 
\un_T \otimes q$. But 
$M (\un_N \otimes q) = (S \otimes I_K) (\un_N \otimes q) 
= S \un_N \otimes q = \un_T \otimes q$ by Lemma \ref{lm-S}. Since $M$ is 
full column rank by Lemma \ref{lm-span}, we obtain 
$M^* \lambda + \rho M^* (\un_T \otimes q) = H (\un_N \otimes q)$, or 
\begin{align*}
M^* \lambda &= \Phi (\un_N \otimes q) + 
\rho M^* (S \otimes I_K) (\un_N \otimes q) - \rho M^* (\un_T \otimes q) \\ 
&= \Phi (\un_N \otimes q) 
\end{align*}
by Lemma \ref{lm-S}. This lemma also shows that 
$(\un_N^* \otimes I_K) M^* \lambda = 
(\un_N^* S^* \otimes I_K) ((I_T - \Pi) \otimes I_K) \lambda = 0$. Hence 
$(\un_N^* \otimes I_K) \Phi (\un_N \otimes q) = \sum_1^N \Phi_n q = 0$, 
which implies $q = 0$ by Assumption \ref{nabla2}. 
This shows that $Q \zeta = P \zeta = 0$, in other words 
$\zeta \in \ker(Q) \cap \ker(P)$. 
% In conclusion, $\zeta \in \ker(P) = \ker(\Pi) \otimes \RR^K$ and also  
% $\zeta \in \ker(Q) = \ker(M M^*) = \ker ( SS^* \otimes I_K) = 
% \ker(SS^*) \otimes \RR^K$. 
\end{proof}

Recall from Lemma \ref{lm-R} that all the Jordan blocks of $R$ are trivial. 
Denoting as $\dim(\cdot)$ the dimension of a vector space, we have: 
\begin{lemma}
\label{vap-1}
The matrix $R$ has an eigenvalue equal to $1$ if and only if 
$\dim({\cal N}) > 0$. In that case, let 
\[
R = W \Lambda W^{-1} = 
\Bigl[ W_1 \ W_2 \Bigr] 
\begin{bmatrix} I & \\ & \tilde\Lambda \end{bmatrix} 
\begin{bmatrix} \ \underline W_1^* \ \\ \ \underline W_2^* \ \end{bmatrix} 
% = W_1 \underline W_1^* + W_2 \tilde\Lambda \underline W_2^* 
\]
be a spectral factorization of $R$. %\pbnote{Faut-il préciser que $\tilde \Lambda$ diagonale dont les elts sont de module $<1$ ?} 
Then $W_1 \underline W_1^* = 
\Pi_{\cal N}$, the orthogonal projection matrix onto $\cal N$. 
Whether $R$ has or has not an eigenvalue equal to $1$ (in which case 
we set $\Pi_{\cal N} = 0$), 
\[
R - \Pi_{\cal N} =  (\Pi_{\colsp(P+Q)} - (P+Q) )(I-2P)  . 
\]
\end{lemma}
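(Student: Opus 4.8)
The statement has two parts: first, characterizing when $R$ has eigenvalue $1$ and identifying the spectral projector $W_1 \underline W_1^*$ with the orthogonal projection $\Pi_{\cal N}$; second, the algebraic identity $R - \Pi_{\cal N} = (\Pi_{\colsp(P+Q)} - (P+Q))(I-2P)$. Let me sketch how I'd prove each.

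Let me verify the eigenvalue-$1$ characterization. We have $R = (P_\perp - Q)(I-2P)$, and since $(I-2P)$ is an involution (orthogonal and its own inverse), $\zeta$ is a $1$-eigenvector iff $(P_\perp - Q)(I-2P)\zeta = \zeta$. Writing $\eta = (I-2P)\zeta$, so $\zeta = (I-2P)\eta$, the condition becomes $(P_\perp - Q)\eta = (I-2P)\eta$, i.e. $(P_\perp - Q - I + 2P)\eta = 0$, which simplifies to $(P - Q)\eta = 0$ since $P_\perp - I + 2P = -P + 2P = P$. Hence $\eta \in \ker(P-Q) = {\cal N}$, and by Lemma~\ref{lm-N}, ${\cal N} = \ker(Q+P) = \ker Q \cap \ker P$. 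So the $1$-eigenspace of $R$ is $(I-2P){\cal N}$; but for $\eta \in {\cal N}$ we have $P\eta = 0$, so $(I-2P)\eta = \eta$, meaning the $1$-eigenspace \emph{equals} ${\cal N}$. This already shows $R$ has eigenvalue $1$ iff $\dim({\cal N}) > 0$, and that the geometric $1$-eigenspace is exactly ${\cal N}$.

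For the spectral projector, I would use that by Lemma~\ref{lm-R} all Jordan blocks of $R$ are trivial (algebraic equals geometric multiplicity), so $W_1 \underline W_1^*$ is the \emph{spectral} (Riesz) projector onto the $1$-eigenspace ${\cal N}$ along the complementary invariant subspace $\colsp(W_2)$. The claim is that this projector is \emph{orthogonal}, i.e. equals $\Pi_{\cal N}$. The natural route is to show $R$ acts nicely relative to ${\cal N}$: on ${\cal N} = \ker Q \cap \ker P$ we have $P\eta = 0$ and $Q\eta = 0$, so $R\eta = (P_\perp - Q)(I-2P)\eta = P_\perp \eta = \eta$ (using $P\eta=0$), confirming ${\cal N}$ is fixed; and on ${\cal N}^\perp = \colsp(P+Q)$ I would check that $R$ (or rather $R$ restricted appropriately) leaves ${\cal N}^\perp$ invariant, so that the invariant splitting $\RR^{TK} = {\cal N} \oplus {\cal N}^\perp$ is \emph{orthogonal}, forcing the spectral projector to be the orthogonal projector $\Pi_{\cal N}$. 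The key sub-fact is $\colsp(P+Q) = {\cal N}^\perp$, which follows from Lemma~\ref{lm-N}: since $P, Q \geq 0$, $\ker(P+Q) = \ker P \cap \ker Q = {\cal N}$, so $\colsp(P+Q) = \ker(P+Q)^\perp = {\cal N}^\perp$ (using symmetry of $P+Q$). I'd then verify $R({\cal N}^\perp) \subseteq {\cal N}^\perp$ by a direct computation, e.g.\ showing that for $\zeta \in {\cal N}^\perp$, $R\zeta$ stays orthogonal to ${\cal N}$; this uses that both $P$ and $Q$ map into ${\cal N}^\perp$ and that $(I-2P)$ preserves ${\cal N}^\perp$ (because $P{\cal N}^\perp \subseteq \colsp P \subseteq \colsp(P+Q) = {\cal N}^\perp$).

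For the final identity, note $\Pi_{\cal N} = \Pi_{\colsp(P+Q)^\perp} = I - \Pi_{\colsp(P+Q)}$. I would compute
\[
R - \Pi_{\cal N} = (P_\perp - Q)(I-2P) - (I - \Pi_{\colsp(P+Q)}).
\]
The plan is to rewrite $I = (I-2P)(I-2P)$ using the involution, and to exploit that $\Pi_{\colsp(P+Q)}$ commutes with $(I-2P)$ — indeed since $(I-2P)$ preserves both ${\cal N}$ and ${\cal N}^\perp = \colsp(P+Q)$, it commutes with $\Pi_{\colsp(P+Q)}$, so $I - \Pi_{\colsp(P+Q)} = \Pi_{\cal N} = \Pi_{\cal N}(I-2P)(I-2P) = \big(\Pi_{\cal N}(I-2P)\big)(I-2P)$, and on ${\cal N}$ the factor $(I-2P)$ acts as identity so $\Pi_{\cal N}(I-2P) = \Pi_{\cal N}$. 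Factoring $(I-2P)$ out on the right then gives
\[
R - \Pi_{\cal N} = \big(P_\perp - Q - \Pi_{\cal N}\big)(I-2P),
\]
and the bracket becomes $P_\perp - Q - (I - \Pi_{\colsp(P+Q)}) = \Pi_{\colsp(P+Q)} - P - Q$ upon substituting $P_\perp = I - P$. This is exactly the desired right-hand side $(\Pi_{\colsp(P+Q)} - (P+Q))(I-2P)$.

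The main obstacle I anticipate is establishing \emph{orthogonality} of the spectral decomposition — i.e.\ that $W_1\underline W_1^*$ is genuinely $\Pi_{\cal N}$ and not merely an oblique projector onto ${\cal N}$. Triviality of the Jordan structure (Lemma~\ref{lm-R}) guarantees a well-defined spectral projector, but orthogonality requires the complementary invariant subspace to be precisely ${\cal N}^\perp = \colsp(P+Q)$. The delicate verification is that $R$ maps $\colsp(P+Q)$ into itself; I'd handle this by carefully tracking how $(I-2P)$, $P_\perp$, and $Q$ each interact with the splitting ${\cal N} \oplus \colsp(P+Q)$, relying crucially on Lemma~\ref{lm-N} to pin down $\ker(P+Q) = {\cal N}$ and on Lemma~\ref{lm-S} for the algebra of $S$ and $\Pi$ underlying the structure of $Q$.
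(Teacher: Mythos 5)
Your proposal is correct and follows essentially the same route as the paper: the eigenvalue-$1$ characterization via the substitution $\eta=(I-2P)\zeta$ together with Lemma~\ref{lm-N}, and the closing identity via $\Pi_{\cal N}(I-2P)=\Pi_{\cal N}$ and $\Pi_{\colsp(P+Q)}=I-\Pi_{\cal N}$, are exactly the paper's computations. The only cosmetic difference is in identifying $W_1\underline W_1^*$ with $\Pi_{\cal N}$: you argue via $R$-invariance of ${\cal N}^\perp=\colsp(P+Q)$ (plus semisimplicity from Lemma~\ref{lm-R}), whereas the paper shows the left $1$-eigenspace also equals ${\cal N}$ -- an equivalent fact, since $R^*\eta=\eta$ on ${\cal N}$ is precisely the invariance of ${\cal N}^\perp$ -- and then concludes with $\underline W_1=W_1(W_1^*W_1)^{-1}$; both routes close the orthogonality gap you flagged.
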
 
\begin{proof}
When $R$ has an eigenvalue equal to $1$, any vector $w$ of the associated right 
eigenspace ${\cal E}_{\text{right}}(1)$ satisfies 
$(P_\perp - Q) (I - 2P) w = w$. Writing $w_{\text{R}} = (I - 2P) w$ and 
recalling that $(I-2P)^2 = I$, we obtain 
$w \in {\cal E}_{\text{right}}(1) \Leftrightarrow (Q - P) w_{\text{R}}=0$, 
in other words $w_{\text{R}} \in \cal N$. But since 
${\cal N} \subset \ker(P)$ by Lemma \ref{lm-N}, we obtain that
${\cal E}_{\text{right}}(1)={\cal N}$. 
We show similarly that the left eigenspace associated with the eigenvalue 
$1$ is $\cal N$. 

Turning to the spectral factorization, since 
$\colsp(W_1) = \colsp(\underline W_1) = \cal N$, we can write 
$\underline W_1 = W_1 U$ where $U$ is an invertible matrix. Since  
$\underline W_1^* W_1 = I_{\dim(\cal N)}$, we have 
$U = (W_1^* W_1)^{-1}$ which shows that $W_1 \underline W_1^* = \Pi_{\cal N}$. 

Finally, since $(I-2P)^2 = I$ and ${\cal N} \subset \ker(P)$, 
\begin{align*} 
R - \Pi_{\cal N} &= ( I - (P+Q) - \Pi_{\cal N} (I - 2P) ) (I-2P) \\ 
&= ( I - (P + Q + \Pi_{\cal N}) )(I - 2P) \\
&= (\Pi_{\colsp(P+Q)} - (P+Q) )(I-2P)  . 
\end{align*} 
\end{proof} 

The spectral properties of $R$ that we just established lead us to the 
following lemma. We denote by $A^\sharp$ the Moore-Penrose pseudo-inverse of 
the matrix $A$. 
\begin{lemma}
\label{cvg-quad} 
Define $\zetalim = \rho (I - 2P) (P-Q)^\sharp M H^{-1} c$. The set of 
fixed points of  the transformation $\zeta\mapsto R\zeta + d$ coincides with 
$\{\zetalim\}+{\cal N}$.
\end{lemma}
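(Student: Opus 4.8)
The plan is to characterize the fixed points directly, by turning the equation $(I-R)\zeta = d$ into a linear system driven by the \emph{symmetric} matrix $Q-P$, whose kernel is exactly $\cal N$ by Lemma~\ref{lm-N}. A vector $\zeta$ is a fixed point iff $\zeta = R\zeta + d$. To exploit the reflection $I-2P$, I would introduce the change of variable $w = (I-2P)\zeta$; since $(I-2P)^2 = I$ this is invertible with inverse $\zeta = (I-2P)w$. Substituting $\zeta = (I-2P)w$ into $\zeta = (P_\perp - Q)(I-2P)\zeta + d$ gives $(I-2P)w = (P_\perp - Q)w + d$, hence $[(I-2P)-(P_\perp-Q)]w = d$. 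Because $(I-2P)-P_\perp = -P$, the bracket collapses to $Q-P$, so the fixed-point equation is equivalent to $(Q-P)w = d$ with $w=(I-2P)\zeta$.

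Next I would check that this system is \emph{solvable}. Since $Q-P$ is symmetric, $\colsp(Q-P) = {\cal N}^\perp$, so it suffices that $d \perp \cal N$. Recall $d = -\rho M H^{-1} c$. As $Q = \rho M H^{-1}M^*$ with $H = \Phi + \rho M^*M \succ 0$ (here $M^*M>0$ and $\Phi \geq 0$) and $M$ injective by Lemma~\ref{lm-span}, one has $\ker(Q) = \ker(M^*)$; Lemma~\ref{lm-N} then gives ${\cal N} \subset \ker(Q) = \ker(M^*)$. For any $\zeta\in\cal N$ this yields $\langle d,\zeta\rangle = -\rho\langle H^{-1}c, M^*\zeta\rangle = 0$, so $d \in {\cal N}^\perp = \colsp(Q-P)$ and solutions exist.

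Finally I would read off the solution set. For a symmetric matrix with $d$ in its range, the solutions of $(Q-P)w = d$ form the affine set $\{(Q-P)^\sharp d\} + \ker(Q-P) = \{(Q-P)^\sharp d\} + \cal N$. Mapping back through the bijection $w\mapsto(I-2P)w$, the fixed points are $(I-2P)\{(Q-P)^\sharp d\} + (I-2P)\cal N$. Two simplifications close the argument: by Lemma~\ref{lm-N}, ${\cal N}\subset\ker(P)$, so $I-2P$ acts as the identity on $\cal N$ and $(I-2P){\cal N} = \cal N$; and using $(P-Q)^\sharp = -(Q-P)^\sharp$ with $d = -\rho M H^{-1}c$, the particular solution becomes $(I-2P)(Q-P)^\sharp d = \rho(I-2P)(P-Q)^\sharp M H^{-1}c = \zetalim$. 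Hence the fixed-point set is exactly $\{\zetalim\} + \cal N$.

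I expect the only delicate point to be the solvability check $d\in\colsp(Q-P)$, which rests entirely on identifying $\ker(Q)$ with $\ker(M^*)$ and invoking Lemma~\ref{lm-N} to place $\cal N$ inside $\ker(M^*)$. The substitution $w=(I-2P)\zeta$ and the collapse of the bracket to $Q-P$ are the mechanical heart of the proof, but they are routine once the reflection identity $(I-2P)^2 = I$ is used.
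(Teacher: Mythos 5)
Your proof is correct and follows essentially the same route as the paper's: reduce the fixed-point equation via $\bar\zeta_{\mathrm{R}}=(I-2P)\bar\zeta$ to the symmetric system $(Q-P)\bar\zeta_{\mathrm{R}}=-\rho M H^{-1}c$, verify solvability by placing $\cal N$ inside $\ker Q$ (hence orthogonal to $d$) using Lemma~\ref{lm-N}, read off the solution set via the pseudo-inverse, and map back with $(I-2P)$, which fixes $\cal N$ since ${\cal N}\subset\ker(P)$. Your solvability check through $\ker(Q)=\ker(M^*)$ is a harmless micro-variant of the paper's observation that $MH^{-1}c\in\colsp(Q)$, and nothing of substance differs.
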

\begin{proof}
Let $\bar\zeta$ be a fixed point of the transformation $\zeta' = R\zeta + d$.
Using the identity $(I-2P)^2 = I$, the equation $(I-R) \bar\zeta = d$ 
reads $(Q - P) \bar\zeta_{\text{R}} = -\rho M H^{-1} c$ where
$\bar\zeta_{\text{R}} = (I - 2P) \bar\zeta$. 
\modif{Note that $ M H^{-1} c\in \colsp{Q}$. By Lemma~\ref{lm-N}, ${\cal N}=\ker{(Q-P)}\subset \ker Q$, thus 
$u^TMH^{-1}c=0$ for any $u\in {\cal N}$. This means that $ M H^{-1} c\in \colsp{(Q-P)}$.
Consequently, the set of solutions to $(Q - P) \bar\zeta_{\text{R}} = -\rho M H^{-1} c$ is nonempty 
and reads $\bar\zeta_{\text{R}} \in \rho (P-Q)^\#M H^{-1} c+{\cal N}$.
Note that multiplication by $(I-2P)$ leaves the space $\cal N$ invariant by Lemma~\ref{lm-N}.
By multiplying both sides of the above equality by $(I-2P)$, we obtain $\bar\zeta\in \rho (I-2P)(P-Q)^\#M H^{-1} c+{\cal N}$
which proves Lemma~\ref{cvg-quad}.}
\end{proof}

We are now in position to prove Theorem~\ref{main} in the quadratic case. 
Item \emph{i)} in the statement of this theorem was shown by
Lemma \ref{vap-1}.  

Given any fixed point $\zeta_\star$ of the transformation $\zeta' = R \zeta 
+ d$, we have 
\begin{equation}
\label{iterees-zeta}
\zeta_k - \zeta_\star = R (\zeta_{k-1} - \zeta_\star) = \cdots = 
R^k (\zeta_{0} - \zeta_\star) .
\end{equation} 
Defining $\xlim = - H^{-1}M^*(I-2P)\zeta_\star - H^{-1}c$ and recalling 
Eq.~\eqref{x-quad}, we have
\begin{align*}
x_{k+1} - \xlim &= - H^{-1} M^* (I - 2P)( \zeta_k - \zeta_\star ) \\
                  &= - H^{-1} M^* (I - 2P)R^k (\zeta_{0} - \zeta_\star) \\
&= - H^{-1} M^* (I - 2P)
\left( \Pi_{\cal N} + W_2 \tilde\Lambda^k \underline W_2^* \right) 
(\zeta_{0} - \zeta_\star)\,.
\end{align*}
By Lemma~\ref{lm-N}, note that $P\Pi_{\cal N} = Q\Pi_{\cal N}=0$. 
\modifwh{Since $\colsp(Q) = \colsp(M)$, we have $M^* (I-2P) \Pi_{\cal N}= 0$.}
% Hence, $MH^{-1}M^*(I-2P)\Pi_{\cal N}=0$.  As $M$ is
% full-column rank, we have $M^* (I-2P) \Pi_{\cal N}= 0$. As a
Consequently, 
\begin{equation}
\label{eq:vit_quad}
x_{k+1} - \xlim = - H^{-1} M^* (I - 2P) W_2
\tilde\Lambda^k \underline W_2^* (\zeta_{0} - \zeta_\star)\,.
\end{equation}
Therefore, $(x_{k})$ converges to $\xlim$ as $k\to\infty$.
Since we know already by Theorem~\ref{1st-ord} that $(x_{k})$
converges to $\un_N\otimes x_\star$, this implies that
$\xlim=\un_N \otimes x_\star$.  It is worth noting that this
identity could have been be derived directly from the mere definition of
$\xlim$ with no need to use Theorem~\ref{1st-ord}. As a sanity
check, the reader may indeed verify that $\xlim=\un_N \otimes
x_\star$ using direct algebra. We skip this verification here as it is
not mandatory for the proof.  

Equality~(\ref{eq:vit_quad}) yields Theorem~\ref{main}-\emph{ii)} in 
the quadratic case. 

To show Theorem~\ref{main}-\emph{iii)}, write 
\[
W_2 \tilde\Lambda^k \underline W_2^* = 
\Bigl[ W_{2,1} \ W_{2,2} \Bigr] 
\begin{bmatrix} \tilde\Lambda_1^k & \\ & \tilde\Lambda_2^k \end{bmatrix} 
\begin{bmatrix} \ \underline W_{2,1}^* \ \\ \ \underline W_{2,2}^* \ 
\end{bmatrix} 
\]
where $\tilde\Lambda_1$ collects on its diagonal the eigenvalues of 
$W_2 \tilde\Lambda \underline W_2^*$ with the absolute value $\alpha$, and
write 
\begin{align*} 
x_{k+1} - \xlim &= 
-H^{-1} M^* (I-2P) W_{2,1} \tilde\Lambda_1^k \underline W_{2,1}^* 
       (\zeta_0 - \zeta_\star) + \xi_k \\
&= G \tilde\Lambda_1^k v + \xi_{k} 
\end{align*} 
where $G=-H^{-1} M^* (I-2P) W_{2,1}$, $v=\underline W_{2,1}^*(\zeta_0 - \zeta_\star)$
and $\xi_k=-H^{-1} M^* (I-2P) W_{2,2}\Lambda_2^k
\underline W_{2,2}^*(\zeta_0 - \zeta_\star)$\,.
The condition on $\colsp((I-2P)M)$ in the statement of Theorem \ref{main} 
asserts that $G \neq 0$. 
On the other hand, 
$v = \underline W_{2,1}^* (\zeta_0 - \zetalim)$ by Lemmas \ref{vap-1}
and \ref{cvg-quad}. Hence, when $\zeta_0$ lies outside a set with zero 
Lebesgue measure as we shall assume, $Gv \neq 0$.
Denote by $\alpha e^{\imath\theta_1},\dots,\alpha e^{\imath\theta_L}$ the distinct 
elements on the diagonal of $\tilde\Lambda_1$.
For any $k\geq 0$, the vector $G\tilde\Lambda_1^kv$ coincides with
$\alpha^k f(k)$ where $f(k)=\sum_{\ell=1}^L a_\ell e^{\imath\theta_\ell k}$
for some coefficients $a_1,\dots,a_L$. Since $Gv\neq 0$, at least one of these 
coefficients is non zero. Hence, 
$\lim\sup_k | f(k) | > 0$.
% By possibly replacing $G$ and $v$ in the expression of $G \tilde\Lambda_1^k v$
% with a matrix with a smaller number of columns and a vector with a smaller
% size respectively, we can furthermore assume that all the diagonal elements of 
% $\tilde\Lambda_1$
% are distinct. Write $a_1 = [G]_{11} [v]_1, \ldots, a_m = [ G ]_{1m} v_m$ 
% where $m$ is the size of $v$. Since $Gv \neq 0$, we can assume without 
% generality loss that $(a_1, \ldots, a_m) \neq (0,\ldots, 0)$. Write 
% $\tilde\Lambda_1 = \alpha \diag(e^{\imath\theta_1},\ldots,e^{\imath\theta_m})$
% where the $\theta_\ell$ belong to $[0, 2\pi)$. Then 
% $[ G \tilde\Lambda_1^k v ]_1 = \alpha^k \sum_{\ell=1}^m a_\ell 
% e^{\imath\theta_\ell k} = \alpha^k f(k)$. But 
% $\lim\sup_k | f(k) | > 0$ since $(a_1, \ldots, a_m) \neq (0,\ldots, 0)$:  
Indeed, one can easily show that $n^{-1} \sum_{k=0}^{n-1} | f(k) |^2 
\xrightarrow[n\to\infty]{} \sum | a_\ell |^2  > 0$. This would not be possible
if $\lim\sup_k | f(k) | = 0$.  

By construction, $\sprad(\tilde\Lambda_2) < \alpha$, hence 
$\| \xi_k \| \leq C \beta^k$ where $C$ is a constant and where 
$0 \leq \beta < \alpha$. We therefore have 
\[
\| x_{k+1} - \xlim \| \geq \| G \tilde\Lambda_1^k v \| - \| \xi_k \| 
\geq \alpha^k ( |f(k)| - C (\beta / \alpha)^k ) 
\]
and we obtain that $\| x_{k+1} - \xlim \| \geq \alpha^k g(k)$ where 
$g(k) = \max(|f(k)| - C (\beta / \alpha)^k, 0)$.
% if $|f(k)|-C(\beta / \alpha)^k \geq 0$ and $g(k) = 0$ otherwise. 
Observing that $\limsup_k g(k) > 0$ and using the convention 
$\log 0 = -\infty$, we obtain 
\begin{align*} 
\lim\sup_k \frac 1k \log \| x_{k+1} - \xlim \| &\geq 
\log\alpha + \lim\sup_k \frac 1k \log g(k) \\
&= \log\alpha .
\end{align*} 
Combining this lower bound with the already established upper bound, we obtain 
the result.

\subsection{The general case} 
\label{subsec-general}

We now assume that the functions $f_n$ satisfy Assumptions \ref{conv-lsc} and
\ref{nabla2} in full generality. Theorem~\ref{1st-ord} shows that the 
iterates $x_k$ converge towards the unique minimizer 
$\xlim = \un_N \otimes x_\star$ of the problem. 
For $k$ large enough, the iterates $x_k$ are in a neighborhood of $\xlim$
where the functions $f_n$ are differentiable, and the $x$ -- update 
equation (Eq.~\eqref{xk}) boils down to the equation 
$\nabla f(x_{k+1}) + \rho M^* M x_{k+1} = - M^* (\lambda_k -
\rho z_k) = -M^* (I-2P)\zeta_k$. This equation can be rewritten in two
different manners. On the one hand, we have 
\[
% \begin{equation}
% \label{x-prox} 
x_{k+1} = \text{prox}_h( - (\rho M^* M)^{-1} M^* (I-2P) \zeta_k) 
% \end{equation}
\] 
where $h(x) = \sum_{n=1}^N [(\rho M^* M)^{-1}]_{nn} f_n(x(n))$, and 
on the other hand, we have for any $x$ close enough to $\xlim$ 
% \begin{align*}
\[
\nabla f(x) 
% &= \nabla f(\xlim) + 
%   \Bigl( \int_0^1 \nabla^2 f(\xlim + t(x-\xlim))\, dt \Bigr) (x-\xlim) \\
= \nabla f(\xlim) + \nabla^2 f(\xlim)(x-\xlim) + E(x-\xlim)  
% \end{align*}
\] 
where $\| E(x) \| / \| x \|\to 0$ as $x \to 0$.  
With this relation, the update equation for $x$ becomes 
\begin{align*} 
(\nabla^2 f(\xlim) + \rho M^* M) x_{k+1} &= -M^* (I-2P) \zeta_k - c \\ 
&\phantom{=} - E(x_{k+1} - \xlim) 
\end{align*} 
where $c = \nabla f(\xlim) - \nabla^2 f(\xlim) \xlim$, or equivalently
\begin{align} 
x_{k+1} &= - H^{-1} M^* (I-2P) \zeta_k - H^{-1} c \nonumber \\ 
&\phantom{=} - H^{-1} E(x_{k+1}- \xlim) 
\label{iterees-x} 
\end{align} 
where $H = \nabla^2 f(\xlim) + \rho M^* M$. Mimicking the derivation made 
before Remark~\ref{DR}, this equation leads to 
\begin{equation}
\label{zeta-general} 
\zeta_{k+1} = R\zeta_k + d - \rho M H^{-1} E(x_{k+1} - \xlim) 
\end{equation} 
where $R = (P_\perp - Q)(I- 2P)$ with $Q = \rho M H^{-1} M^*$ as in 
Equation \eqref{def-Q}, and where $d = - \rho M H^{-1} c$. \\ 
By replacing the matrix 
$\Phi$ defined in \eqref{def-Sigma} with $\nabla^2 f(\xlim)$, we notice 
that the lemmas \ref{lm-R}--\ref{cvg-quad} remain true for the matrices 
$R$ and $Q$ just introduced. 
Moreover, Equation \eqref{zeta-general} shows that the sequence $\zeta_k$
converges to a fixed point of the transformation $\zeta' = R \zeta + d$. 
Making $k\to\infty$ in \eqref{iterees-x}, we also notice that 
\begin{align}
\xlim &= - H^{-1} M^* (I-2P) \zeta_\star - H^{-1} c \label{xlim} \\
&= 
\text{prox}_{h}( - (\rho M^* M)^{-1} M^* (I-2P) \zeta_\star)
\nonumber 
\end{align} 
where $\zeta_\star$ is any fixed point of the transformation 
$\zeta' = R \zeta + d$. \\
We now have the elements to establish the Theorem~\ref{main}-\emph{ii)}. 
Given one fixed point $\zeta_\star$, the analogue of Eq. \eqref{iterees-zeta} 
is 
\begin{equation}
\label{delta-zeta} 
\zeta_k - \zeta_\star = R^k (\zeta_0 - \zeta_\star)  
- \rho \sum_{\ell=1}^k R^{k-\ell} M H^{-1} E(x_\ell - \xlim) . 
\end{equation} 
Lemma~\ref{lipsh} shows now that 
\begin{align*} 
& \| x_{k+1} - \xlim \| \\
=& \| \text{prox}_h( - (\rho M^* M)^{-1} M^* (I-2P) \zeta_k ) \\ 
& \ \ \ \ 
- \text{prox}_h( - (\rho M^* M)^{-1} M^* (I-2P) \zeta_\star ) \| \\
\leq& \| (\rho M^* M)^{-1} M^* (I-2P) (\zeta_k - \zeta_\star) \| \\
\leq& \| (\rho M^* M)^{-1} \| \ \Bigl( 
   \| M^* (I-2P) R^k (\zeta_0 - \zeta_\star) \| \\
& + \rho \sum_{\ell=1}^k \| M^* (I-2P) R^{k-\ell} 
M H^{-1} E(x_\ell - \xlim) \| \Bigr) .  
\end{align*} 
Our purpose is to show that 
\begin{equation}
\label{borne-sup}
\forall \, \varepsilon > 0, \ 
\sup_k (\alpha+\varepsilon)^{-k} \| x_{k+1} - \xlim \| < \infty .
\end{equation} 
This shows indeed that 
\[
\limsup_k \frac{\log\| x_{k+1} - \xlim \|}{k} \leq \log\alpha 
+ \log(1 + \frac{\varepsilon}{\alpha}) 
\]
for any $\varepsilon > 0$, which is equivalent to 
Theorem~\ref{main}-\emph{ii)}. 

Fix $\varepsilon > 0$. Recall that $\| M^* (I-2P) R^k \| \leq C \alpha^k$
where $C$ is a constant, $x_k \to \xlim$, and $\| E(x_k - \xlim) \| = 
o(\| x_k - \xlim \|)$. 
By delaying the time origin as much as needed, we can assume that 
for any $k > 0$ and any $\ell \in \{1,\ldots, k\}$, 
\begin{gather*} 
\| (\rho M^* M)^{-1} \| \, \| M^* (I-2P) R^k (\zeta_0 - \zeta_\star) \| 
\leq \alpha^{k+1}, \\
\rho \| (\rho M^* M)^{-1} \| \, \| M^* (I-2P) R^{k-\ell}
M H^{-1} \| \leq \alpha^{k+1-\ell},  \\
\| E(x_\ell- \xlim) \| \leq \delta \| x_\ell - \xlim \|,  \ \text{and} \\
\| x_0 - \xlim \| \leq B = 
\frac{\varepsilon}{\varepsilon - \delta(\alpha+\varepsilon)} 
\end{gather*} 
where we choose $\delta < \varepsilon / (\alpha+\varepsilon)$. With this 
choice of the time origin, we have 
\[
\| x_{k+1} - \xlim \| \leq \alpha^{k+1} + \delta \sum_{\ell=1}^k 
\alpha^{k+1-\ell} \| x_{\ell} - \xlim \| . 
\]
Putting $w_k = (\alpha+\varepsilon)^{-k} \| x_{k} - \xlim \|$, this inequality
is rewritten 
\[
w_{k+1} \leq \Bigl(\frac{\alpha}{\alpha+\varepsilon}\Bigr)^{k+1} 
+ \delta \sum_{\ell=0}^k  
\Bigl(\frac{\alpha}{\alpha+\varepsilon}\Bigr)^{k+1-\ell} w_\ell .
\]
We know that $w_0 \leq B$. Assume that $w_1, \ldots, w_k \leq B$. Then
\[
w_{k+1} < 1 + \delta B \sum_{\ell=0}^k 
\Bigl(\frac{\alpha}{\alpha+\varepsilon}\Bigr)^{k+1-\ell} 
< 1 + \frac{\delta B}{1 - \frac{\alpha}{\alpha+\varepsilon}} 
= B
\]
and Inequality \eqref{borne-sup} is established. 

We now show Theorem~\ref{main}-\emph{iii)}. 
From the equations \eqref{iterees-x}, \eqref{xlim} and \eqref{delta-zeta}, 
we have 
\begin{align*}
x_{k+1} - \xlim &= 
-H^{-1} M^* (I-2P) R^k (\zeta_0 - \zeta_\star) \\
&\phantom{=} 
+ \rho \sum_{\ell=1}^k 
H^{-1} M^* (I-2P) R^{k-\ell} M H^{-1} E(x_\ell - \xlim) \\
&\phantom{=} - H^{-1} E(x_{k+1} - \xlim) \\
&= X_k + Y_k + Z_k . 
\end{align*}
By the argument establishing Theorem~\ref{main}-\emph{iii)} in the quadratic 
case, for any $\zeta_0$ outside a set of Lebesgue measure zero, there is a 
function $g(k)$ such that $\| X_k \| \geq \alpha^k g(k)$ and 
$a = \limsup_k g(k) > 0$. For the sake of contradiction, assume that 
$\limsup_k ( k^{-1} \log \| x_k - \xlim \| ) < \log\alpha$ for this $\zeta_0$. 
Then $\| x_k - \xlim \| \leq C \beta^k$ for some $C > 0$ and some 
$\beta \in (0, \alpha)$. By delaying the time origin as much as needed, we can 
assume that 
\begin{align*} 
\frac{\| E(x_\ell - \xlim) \|}{\| x_\ell - \xlim \|} &\leq 
\delta = \frac{a(\alpha - \beta)}{2\alpha} \ \text{for any } \ell > 0, \ 
\text{and} \\ 
\| Y_k + Z_k \| &\leq 
\delta \sum_{\ell=1}^k \alpha^{k-\ell} \beta^\ell   
           \ + \ \delta \beta^k    \\ 
&< \delta \frac{\alpha}{\alpha - \beta} \alpha^k . 
\end{align*} 
We therefore have 
\[
\| x_{k+1} - \xlim \| \geq \| X_k \| - \| Y_k + Z_k \| 
\geq \alpha^k \Bigl( g(k) - \delta \frac{\alpha}{\alpha - \beta} \Bigr) . 
\]
Since $\limsup_k ( g(k) - \delta \alpha / (\alpha - \beta) ) = a/2 > 0$, 
we obtain $\limsup_k ( k^{-1} \log \| x_k - \xlim \| ) \geq \log\alpha$. 
Theorem~\ref{main}-\emph{iii)} is proven.

\section{Numerical Illustrations}
\label{sec-simus}

We first provide a numerical illustration in the special cases
described in Section~\ref{subsec-centralized}
and~\ref{subsec-ring}. 
The second order derivative $\sigma_\star^2$ of
the functions $f_n$ at the minimum is set to
$16$. Figures~\ref{fig:alpha_centralise} and~\ref{fig:alpha_ring}
represent the rate $\alpha$ as a function of the step-size $\rho$ of
the algorithm in the case of a centralized network and a ring network
respectively.  
\begin{figure}[h]
  \centering
  \includegraphics[width=.9\linewidth]{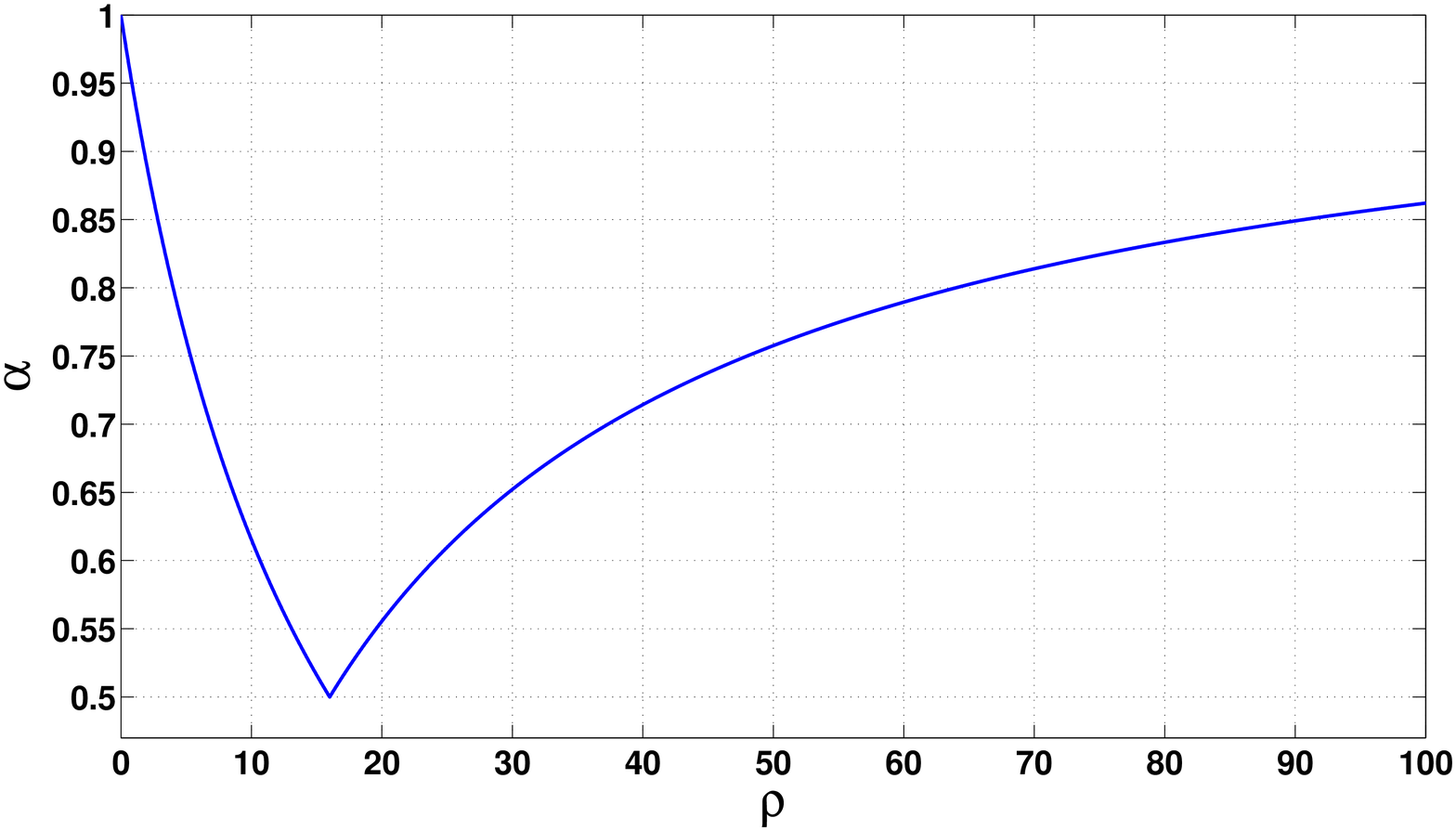}
  \caption{Rate $\alpha$ as a function of $\rho$ - Centralized network - $\sigma_\star^2=16$.}
  \label{fig:alpha_centralise}
\end{figure}
\begin{figure}[h]
  \centering
  \includegraphics[width=.9\linewidth]{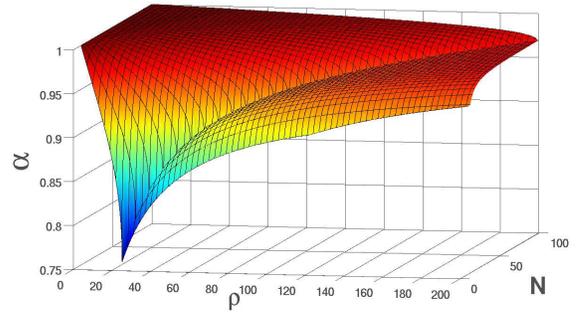}
  \caption{Rate $\alpha$ as a function of $\rho$ and $N$ - Ring network - $\sigma_\star^2=16$.}
  \label{fig:alpha_ring}
\end{figure}
In the centralized case, the optimal value of $\rho$
coincides with $\sigma^2$ and is thus independent of $N$.  In the ring
network, the rate $\alpha$ depends on both $\rho$ and $N$.

We now address the case where the second order derivatives are not necessarily equal. 
We set $N=5$ and assume that the values of $f''_n(x_\star)$ for all agents $n$ are respectively equal to
4, 9, 16, 25 and 39. 
\begin{figure}[h]
  \centering
  \includegraphics[width=.9\linewidth]{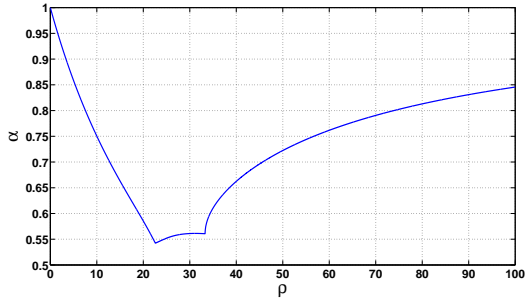}
  \caption{Rate $\alpha$ as a function of $\rho$ - Centralized network - $N=5$ - Distinct second order derivatives.}
  \label{fig:alpha_centralise_nonDiagonal}
\end{figure}
Figure~\ref{fig:alpha_centralise_nonDiagonal} represents the rate $\alpha$ as a function of $\rho$.
%The curve coincides with a piecewise differentiable function.

Finally, we compare our theoretical result with the performance of ADMM observed by simulation; we also compare with the bound of Shi et al.~\cite{shi-etal-(arxiv)13}.
These simulations were conducted on a $20$ nodes Random Geometric Graph with radius $0.2$.
The sets $\{A_\ell\}_{\ell}$ are taken equal to the pairs of connected agents as in Example 2 of Section~\ref{sec:examplesCentralizedRing}.
We plot  $-k^{-1}\log \|x_k-x_\star\|$ as a function of the number of iterations $k$.

In Figure~\ref{fig:simu_exp}, the functions are taken as $f_n(x) = \exp(\beta_n x)$ where the $\beta_n$'s are drawn uniformly in $[-10,10]$ then centered (in that case $\sum_n f_n$ admits $x_\star=0$ as unique minimizer) and $\rho$ is set to $20$.
Here, the bound of  Shi et al.~\cite{shi-etal-(arxiv)13} is not defined (\emph{i.e.}, its $\log$ is equal to zero).
As expected, Figure~\ref{fig:simu_exp} shows that the rate $\alpha$ is tight in the sense that $-k^{-1}\log \|x_k-x_\star\|$
numerically converges to $-\log \alpha$. \begin{figure}[h]
  \centering
  \includegraphics[width=.9\linewidth]{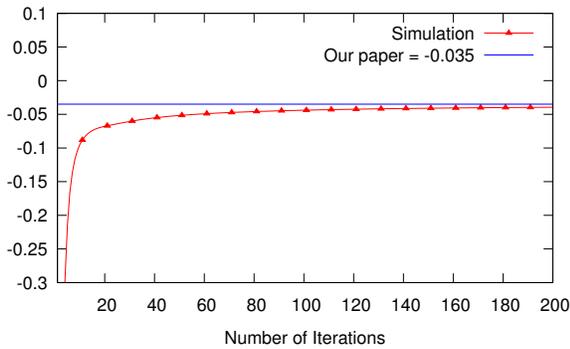}
  \caption{ $k^{-1}\log \|x_k-x_\star\|$ as a function of the number of iterations $k$ - $N=20$ - Exponential functions}
  \label{fig:simu_exp}
\end{figure}

We also investigate the case of quadratic functions. In that case, the bound of Shi et al.~\cite{shi-etal-(arxiv)13} is well defined
and plotted in Figure~\ref{fig:simu_quad}. The functions $f_n$ are defined as $f_n(x) =  a_n(x-b_n)^2$ where the $a_n$'s are drawn uniformly in $[1,100]$ and the $b_n$'s are drawn from a Gaussian distribution with mean $5$ and variance $100$. The parameter $\rho$ has been set to $100$ as this seems to be a good choice to take it around the second order derivatives from the above simulations and derivations. 
We observe that our characterization of the convergence rate is tight in the sense that it fits the empirical performance of ADMM, 
whereas a gap exists between the latter and the bound of \cite{shi-etal-(arxiv)13}. 
\begin{figure}[h]
  \centering
  \includegraphics[width=.9\linewidth]{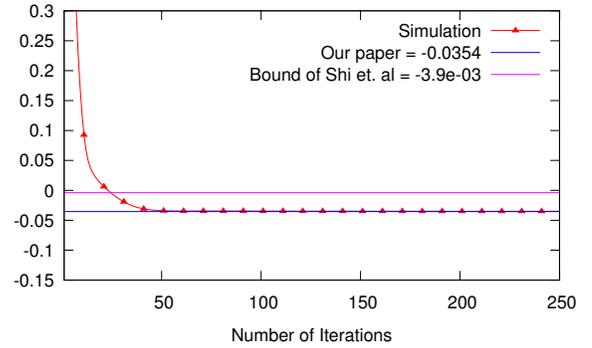}
  \caption{ $k^{-1}\log \|x_k-x_\star\|$ as a function of the number of iterations $k$ - $N=20$ - Quadratic functions}
  \label{fig:simu_quad}
\end{figure}

\section{Conclusion}
\label{sec:conclusion}

In this paper, we addressed the rate of convergence of ADMM to solve
distributively the minimization problem $\inf_{x} \sum_{n = 1}^N f_n(x)$ where 
the $f_n$'s are private convex functions. Letting $\xlim$ be the minimizer and
assuming that the functions are twice differentiable at $\xlim$, we obtained an
explicit characterization of the linear convergence rate under the form
of the spectral radius of a matrix depending to the Hessian
$\sum_n \nabla^2f_n(\xlim)$ and the communication network.
Under mild conditions, it is shown that the obtained rate is tight in the sense
that the actual convergence rate is no faster than the one obtained.

In practice, our analysis is useful to accurately predict the performance of ADMM and
to optimize various parameters, such as the step-size of the algorithm and the 
communication graph. Our method potentially allows to design augmented ADM methods
with enhanced convergence rate.

\bibliographystyle{IEEEbib}
\bibliography{math}

\appendix

\section{Proof of Corollary~\ref{coro:ring}}
\label{sec:ring}

We start by introducing a notation and by recalling a known fact. 
Given three $2 \times 2$ matrices $G_{-1}$, $G_0$ and $G_1$, we denote as 
$A = \circul_N(G_0 + e^{\imath\lambda} G_1 + e^{-\imath\lambda} G_{-1})$  
the $2N \times 2N$ block circulant matrix 
\[
A = \begin{bmatrix} 
G_0    & G_1    &       &   G_{-1} \\ 
G_{-1} &  G_0   & G_1 \\ 
       &        &   \ddots \\ 
G_1    &        & G_{-1} & G_0 
\end{bmatrix} . 
\]
By generalizing a well known result for circulant matrices 
(see \emph{e.g.}~\cite{gray}), we know that the eigenvalue spectrum of $A$ 
coincides with the set of eigenvalues of the trigonometric matrix polynomial 
$ {\cal G}(e^{\imath\lambda}) = 
G_0 + \exp(-\imath\lambda) G_{-1}  + \exp(\imath\lambda)  G_{1} 
$
taken at $\lambda = 2\pi k / N$ for $k = 0,\ldots, N-1$. \\
Getting back to our model, we have $T = 2N$, 
\[
M = \begin{bmatrix} 
1 \\
& 1 \\
& 1 \\
&  & 1 \\
&  & 1 \\
&  &  & \ddots \\
&  &  &       & 1 \\
&  &  &       &  & 1 \\
&  &  &       &  & 1 \\
1 
\end{bmatrix} 
\]
and 
\[
P = \frac 12 \begin{bmatrix} 
1 & 1 \\
1 & 1 \\
  &  & 1 & 1 \\ 
  &  & 1 & 1 \\ 
  &  &   &  & \ddots \\
  &  &   &  &  \\
  &  &   &  &  & 1 & 1 \\
  &  &   &  &  & 1 & 1 \\
  &  &   &  &  &   &  &1 & 1 \\
  &  &   &  &  &   &  &1 & 1 
\end{bmatrix} . 
\]
Since $M^* M = 2 I_N$, we also have 
\begin{align*} 
Q &= \frac{\rho}{\sigma_\star^2 + 2 \rho} \begin{bmatrix}
1&   &  &   &  &    & 1 \\ 
 & 1 & 1 \\
 & 1 & 1 \\
 &   &  & \ddots \\ 
 &   &  &       & 1 & 1 \\
 &   &  &       & 1 & 1 \\
1&   &  &       &   &   & 1 
\end{bmatrix} \\
&= \frac{\rho}{\sigma_\star^2 + 2 \rho} 
\text{circ}_N\Bigl( I_2 + 
e^{\imath\lambda} \begin{bmatrix} 0& 0 \\ 1 & 0\end{bmatrix} 
+ e^{-\imath\lambda} \begin{bmatrix} 0& 1 \\ 0 & 0\end{bmatrix} \Bigr) . 
\end{align*} 
Noticing that $R = I - P - Q + 2QP$ and writing 
$a={\rho}/({\sigma_\star^2 + 2 \rho})$, we have 
\begin{align*}
2QP &= a \ 
\text{circ}_N\Bigl( \Bigl( I_2 + 
e^{\imath\lambda} \begin{bmatrix} 0& 0 \\ 1 & 0\end{bmatrix} + 
e^{-\imath\lambda}\begin{bmatrix} 0& 1 \\ 0 & 0\end{bmatrix} \Bigr) 
\un\un^*\Bigr) \\
&= a \ 
\text{circ}_N\Bigl( \begin{bmatrix} 1 & 1 \\ 1 & 1 \end{bmatrix} + 
e^{\imath\lambda} \begin{bmatrix} 0& 0 \\ 1 & 1\end{bmatrix} +  
e^{-\imath\lambda}\begin{bmatrix} 1& 1 \\ 0 & 0\end{bmatrix} \Bigr). 
\end{align*}
Therefore, $R = \circul_N( G_0 + e^{\imath\lambda} G_1 + 
e^{-\imath\lambda} G_{-1})$ where 
\begin{align*}
G_0 &= I_2 - \frac 12 \un\un^* - a I_2 + a \un\un^*  
= \frac 12 \begin{bmatrix} 1 & -\frac{\sigma_\star^2}{\sigma_\star^2+2\rho} \\ 
-\frac{\sigma_\star^2}{\sigma_\star^2+2\rho}  & 1 \end{bmatrix},  \\
G_1 &= \frac{\rho}{\sigma_\star^2 + 2 \rho} \Bigl( 
\begin{bmatrix} 0 & 0 \\ -1 & 0 \end{bmatrix} + 
\begin{bmatrix} 0 & 0 \\ 1 & 1 \end{bmatrix} \Bigr) 
= \frac{\rho}{\sigma_\star^2 + 2 \rho} 
\begin{bmatrix} 0 & 0 \\ 0 & 1 \end{bmatrix},  \\ 
G_{-1} &= \frac{\rho}{\sigma_\star^2 + 2 \rho} \Bigl( 
\begin{bmatrix} 0 & -1 \\ 0 & 0 \end{bmatrix} + 
\begin{bmatrix} 1 & 1 \\ 0 & 0 \end{bmatrix} \Bigr) 
= \frac{\rho}{\sigma_\star^2 + 2 \rho} 
\begin{bmatrix} 1 & 0 \\ 0 & 0 \end{bmatrix} . 
\end{align*} 
The eigenvalues of ${\cal G}(e^{2\imath\pi k/N}) = G_0 
+ e^{2\imath\pi k/N} G_1 + e^{-2\imath\pi k/N} G_{-1}$ are the solutions
of the equation $\lambda^2 - \lambda s_k + D_k = 0$ where 
\[
s_k = \tr {\cal G}(e^{2\imath\pi k/N}) = 
\frac{\sigma_{\star}^2 + 2\rho \left(1 + \cos(2\pi k/N) \right)}
{\sigma_{\star}^2 + 2\rho} 
\]
and 
\[
D_k = \det {\cal G}(e^{2\imath\pi k/N}) = 
\frac{\rho}{\sigma_\star^2 + 2\rho} 
\left( 1 + \cos(2\pi k/N) \right) . 
\]
The analysis of these solutions for all $k=0,\ldots, N-1$, which is tedious but straightforward, 
directly leads to the expression of $\alpha$ in Corollary~\ref{coro:ring}.
By simple algebra, the rate $\alpha=\alpha(\rho)$ is shown to be a continuous function of $\rho$
which is decreasing on the interval $(0,\frac{\sigma_\star^2}{2s_N}]$ and increasing on $[\frac{\sigma_\star^2}{2s_N},+\infty)$.
\end{document}